\def\ACSP{/research/ACSP}
\newtheorem{theorem}{Theorem}
\newtheorem{lemma}{Lemma}
\newcommand{\Hmsc}{\mathscr{H}}
\newcommand{\Vmsc}{\mathscr{V}}
\newcommand{\Xmsc}{\mathscr{X}}
\newcommand{\Acal}{\mathcal{A}}
\newcommand{\Bcal}{\mathcal{B}}
\newcommand{\Ccal}{\mathcal{C}}
\newcommand{\Ecal}{\mathcal{E}}
\newcommand{\Hcal}{\mathcal{H}}
\newcommand{\Mcal}{\mathcal{M}}
\newcommand{\Pcal}{\mathcal{P}}
\newcommand{\Qcal}{\mathcal{Q}}
\newcommand{\Rcal}{\mathcal{R}}
\newcommand{\Scal}{\mathcal{S}}
\newcommand{\Tcal}{\mathcal{T}}
\newcommand{\Ucal}{\mathcal{U}}
\newcommand{\Wcal}{\mathcal{W}}
\newcommand{\Xcal}{\mathcal{X}}
\newcommand{\Ycal}{\mathcal{Y}}
\newcommand{\Zcal}{\mathcal{Z}}
\newcommand{\Rmbb}{\mathbb{R}}
\newcommand{\Embb}{\mathbb{E}}
\newcommand{\beq}{\begin{equation}}
\newcommand{\eeq}{\end{equation}}
\newcommand{\eps}{\epsilon}
\title{Distributed Source Coding in the Presence of Byzantine Sensors}
\author{Oliver~Kosut,~\IEEEmembership{Student Member, IEEE} and~Lang~Tong,~\IEEEmembership{Fellow,~IEEE}
\thanks{This work is supported in part  by the National
Science Foundation under award CCF-0635070,
the U. S. Army Research Laboratory under
the Collaborative Technology Alliance Program
DAAD19-01-2-0011, and TRUST  (The Team for Research in
Ubiquitous Secure Technology) sponsored by
the National Science Foundation under award
CCF-0424422.}}
\begin{document}
\maketitle

\begin{abstract}
The distributed source coding problem is considered when the sensors, or encoders, are under Byzantine attack; that is, an unknown group of sensors have been reprogrammed by a malicious intruder to undermine the reconstruction at the fusion center. Three different forms of the problem are considered. The first is a variable-rate setup, in which the decoder adaptively chooses the rates at which the sensors transmit. An explicit characterization of the variable-rate achievable sum rates is given for any number of sensors and any groups of traitors. The converse is proved constructively by letting the traitors simulate a fake distribution and report the generated values as the true ones. This fake distribution is chosen so that the decoder cannot determine which sensors are traitors while maximizing the required rate to decode every value. Achievability is proved using a scheme in which the decoder receives small packets of information from a sensor until its message can be decoded, before moving on to the next sensor. The sensors use randomization to choose from a set of coding functions, which makes it probabilistically impossible for the traitors to cause the decoder to make an error. Two forms of the fixed-rate problem are considered, one with deterministic coding and one with randomized coding. The achievable rate regions are given for both these problems, and it is shown that lower rates can be achieved with randomized coding.
\end{abstract}

\begin{keywords}
Distributed Source Coding. Byzantine Attack. Sensor Fusion. Network Security.
\end{keywords}

\section{Introduction}

\PARstart{W}{e} consider a modification to the distributed source coding problem in which an unknown subset of sensors are taken over by a malicious intruder and reprogrammed. We assume there are $m$ sensors. Each time slot, sensors $i$ for $i=1,\cdots,m$ observe random variables $X_i$ according to the joint probability distribution $p(x_1\cdots x_m)$. Each sensor encodes its observation independently and transmits a message to a common decoder, which attempts to reconstruct the source values with small probability of error based on those messages. A subset of sensors are \emph{traitors}, while the rest are \emph{honest}. Unbeknownst to the honest sensors or the decoder, the traitors have been reprogrammed to cooperate to obstruct the goal of the network, launching a so-called Byzantine attack. To counter this attack, the honest sensors and decoder must employ strategies so that the decoder can correctly reconstruct source values no matter what the traitors do.

It is obvious that observations made by the traitors are irretrievable unless the traitors choose to deliver them to the decoder.  Thus the best the decoder can hope to achieve is to reconstruct the observations of the honest sensors. A simple procedure is to ignore the statistical correlations among the observations and collect data from each sensor individually.  The total sum rate of such an approach is  $\sum_i H(X_i)$.   One expects however that this sum rate can be lowered if the correlation structure is not ignored.

Without traitors, Slepian-Wolf coding \cite{Slepian&Wolf:73IT} can be used to achieve a sum rate as low as
\beq H(X_1\cdots X_m).\label{eq:swsumrate}\eeq
However, standard Slepian-Wolf coding has no mechanism for handling any deviations from the agreed-upon encoding functions by the sensors. Even a random fault by a single sensor could have devastating consequences for the accuracy of the source estimates produced at the decoder, to say nothing of a Byzantine attack on multiple sensors. In particular, because Slepian-Wolf coding takes advantage of the correlation among sources, manipulating the codeword for one source can alter the accuracy of the decoder's estimate for other sources. It will turn out that for most source distributions, the sum rate given in \eqref{eq:swsumrate} cannot be achieved if there is even a single traitor.

In this paper, we are interested in  the lowest achievable sum-rate
such that the decoder can reconstruct observations of the honest sensors with arbitrarily
small error probability.  In some cases, we are also interested in the 
rate region.   We note that although the problem setup does not allow the detector 
to distinguish traitors from the honest sensors, an efficient scheme
that guarantees the reconstruction of data from honest sensors
is of both theoretical and practical interest.  For example, 
for a distributed inference problem in the presence of 
Byzantine sensors, a practical (though not necessarily optimal) solution is to attack the problem in two 
separate phases . In the first phase,
the decoder collects data from sensors over multiple access channels
with rate constraints. Here we require that data from honest sensors
are perfectly reconstructed at the decoder even though the decoder
does not know which piece of data is from an honest sensor.  
In the second step, the received data is used for statistical inference.
The example of distributed detection in the presence of Byzantine
sensors is considered in \cite{Marano&Matta&Tong:06Asilomar}.
The decoder may also have other side information about the content of
the messages that allows the decoder to distinguish messages from
the honest sensors.

\subsection{Related Work}

The notion of Byzantine attack has its root in the Byzantine generals problem \cite{Lamport&Shostak&Pease:82ACM,Dolev:82} in which
a clique of traitorous generals conspire to prevent loyal generals from forming consensus. It was shown in \cite{Lamport&Shostak&Pease:82ACM} that consensus in the presence of Byzantine attack is possible if and only if less than a third of the generals are traitors.

Countering Byzantine attacks in communication networks has also been studied in the past by many authors.  See the earlier work of Perlman \cite{Perlman:88thesis} and also more recent review \cite{Zhou&Haas:99,Hu&Perrig:04}. An information theoretic network coding approach to Byzantine attack is presented in \cite{Ho&etal:04ISIT}. In \cite{Awerbuch&etal:02WISE}, Awerbuch et al suggest a method for mitigating Byzantine attacks on routing in ad hoc networks. Their approach is most similar to ours in the way they maintain a list of current knowledge about which links are trustworthy, constantly updated based on new information. Sensor fusion with Byzantine sensors was studied in \cite{Kosut&Tong:06Allerton}. In that paper, the sensors, having already agreed upon a message, communicate it to the fusion center over a discrete memoryless channel. Quite similar results were shown in \cite{Jaggi&etal:05ISIT}, in which a malicious intruder takes control of a set of links in the network. The authors show that two nodes can communicate at a nonzero rate as long as less than half of the links between them are Byzantine. This is different from the current paper in that the transmitter chooses its messages, instead of relaying information received from an outside source, but some of the same approaches from \cite{Jaggi&etal:05ISIT} are used in the current paper, particularly the use of randomization to fool traitors that have already transmitted.

\subsection{Redefining Achievable Rate}\label{subsection:redefinition}

The nature of Byzantine attack require three modifications to the usual notion of achievable rate. The first, as mentioned above, is that small probability of error is required only for honest sources, even though the decoder may not know which sources are honest. This requirement is reminiscent of \cite{Lamport&Shostak&Pease:82ACM}, in which the lieutenants need only perform the commander's order if the commander is not a traitor, even though the lieutenants might not be able to decide this with certainty.

The next modification is that there must be small probability of error no matter what the traitors do. This is essentially the definition of Byzantine attack.

The final modification has to do with which sensors are allowed to be traitors. Let $\Hcal$ be the set of honest sensors, and $\Tcal=\{1,\cdots,m\}\backslash\Hcal$ the set of traitors. Any code is associated with a list of which sets of sensors it can handle as the set of traitors. A rate is then achieved if the code gets small probability of error when the actual set of traitors is in fact on the list. It will be more convenient to specify not the list of allowable sets of traitors, but rather the list of allowable sets of honest sensors. We define $\Hmsc\subset 2^{\{1,\cdots,m\}}$ to be this list. Thus small probability of error is required only when $\Hcal\in\Hmsc$. One special case is when the code can handle any group of at most $t$ traitors. That is,
\[\Hmsc=\Hmsc_t\triangleq\{\Scal\subset\{1,\cdots,m\}:|\Scal|\ge m-t\}.\]
Observe that achievable rates depend not just on the true set of traitors but also on the collection $\Hmsc$, because the decoder's willingness to accept more and more different groups of traitors allows the true traitors to get away with more without being detected. Thus we see a trade off between rate and security---in order to handle more traitors, one needs to be willing to accept a higher rate.

\subsection{Fixed-Rate Versus Variable-Rate Coding}

In standard source coding, an encoder is made up of a single encoding function. We will show that this fixed-rate setup is suboptimal for this problem, in the sense that we can achieve lower sum rates using variable-rate coding. By variable-rate we mean that the number of bits transmitted per source value by a particular sensor will not be fixed. Instead, the decoder chooses the rates at ``run time'' in the following way. Each sensor has a finite number of encoding functions, all of them fixed beforehand, but with potentially different output alphabets. The coding session is then made up of a number of transactions. Each transaction begins with the decoder deciding which sensor will transmit, and which of its several encoding functions it will use. The sensor then executes the chosen encoding function and transmits the output back to the decoder. Finally, the decoder uses the received message to choose the next sensor and encoding function, beginning the next transaction, and so on. Thus a code is made up of a set of encoding functions for each sensor, a method for the decoder to choose sensors and encoding functions based on previously received messages, and lastly a decoding function that takes all received messages and produces source estimates.

Note that the decoder has the ability to transmit some information back to the sensors, but this feedback is limited to the choice of encoding function. Since the number of encoding functions need not grow with the block length, this represents zero rate feedback.

In variable-rate coding, since the rates are only decided upon during the coding session, there is no notion of an $m$-dimensional achievable rate region. Instead, we only discuss achievable sum rates.

\subsection{Traitor Capabilities}

An important consideration with Byzantine attack is the information to which the traitors have access. First, we assume that the traitors have complete knowledge of the coding scheme used by the decoder and honest sensors. Furthermore, we always assume that they can communicate with each other arbitrarily. For variable-rate coding, they may have any amount of ability to eavesdrop on transmissions between honest sensors and the decoder. We will show that this ability has no effect on achievable rates. We assume with fixed-rate coding that all sensors transmit simultaneously, so it does not make sense that traitors could eavesdrop on honest sensors' transmissions before making their own, as that would violate causality. Thus we assume for fixed-rate coding that the traitors cannot eavesdrop.

The key factor, however, is the extent to which the traitors have direct access to information about the sources. We assume the most general memoryless case, that the traitors have access to the random variable $W$, where $W$ is i.i.d. distributed with $(X_1\cdots X_m)$ according to the conditional distribution $r(w|x_1\cdots x_m)$. A natural assumption would be that $W$ always includes $X_i$ for traitors $i$, but in fact this need not be the case. An important special case is where $W=(X_1,\cdots,X_m)$, i.e. the traitors have perfect information.

We assume that the distribution of $W$ depends on who the traitors are, and that the decoder may not know exactly what this distribution is. Thus each code is associated with a function $\Rcal$ that maps elements of $\Hmsc$ to sets of conditional distributions $r$. The relationship between $r$ and $\Rcal(\Hcal)$ is analogous to the relationship between $\Hcal$ and $\Hmsc$. That is, given $\Hcal$, the code is willing to accept all distributions $r\in\Rcal(\Hcal)$. Therefore a code is designed based on $\Hmsc$ and $\Rcal$, and then the achieved rate depends at run time on $\Hcal$ and $r$, where we assume $\Hcal\in\Hmsc$ and $r\in\Rcal(\Hcal)$. We therefore discuss not achievable rates $R$ but rather achievable rate functions $R(\Hcal,r)$. In fact, this applies only to variable-rate codes. In the fixed-rate case, no run time rate decisions can be made, so achievable rates depend only on $\Hmsc$ and $\Rcal$.

\subsection{Main Results}

The main results of this paper give explicit characterizations of the achievable rates for three different setups. The first, which is discussed in the most depth, is the variable-rate case, for which we characterize achievable sum rate functions. The other two setups are for fixed-rate coding, divided into deterministic and randomized coding, for which we give $m$-dimensional achievable rate regions. We show that randomized coding yields a larger achievable rate region than deterministic coding, but we believe that in most cases randomized fixed-rate coding requires an unrealistic assumption. In addition, even randomized fixed-rate coding cannot achieve the same sum rates as variable-rate coding.

We give the exact solutions in Theorems 1 and 2, but describe here the intuition behind them. For variable-rate, the achievable rates are based on alternate distributions on $(X_1\cdots X_m)$. Specifically, given $W$, the traitors can simulate any distribution $\bar{q}(x_\Tcal|w)$ to produce a fraudulent version of $X_\Tcal^n$, then report this sequence as the truth. Suppose that the overall distribution $q(x_1\cdots x_m)$ governing the combination of the true value of $X_\Hcal^n$ with this fake value of $X_\Tcal^n$ could be produced in several different ways, with several different sets of traitors. In that case, the decoder cannot tell which of these several possibilities is the truth, which means that from its point of view, any sensor that is honest in one of these possibilities may in fact be honest. Since the error requirement described in \ref{subsection:redefinition} stipulates that the decoder must produce a correct estimate for every honest sensor, it must attempt to decode the source values associated with all these potentially honest sensors. Thus the sum rate must be at least the joint entropy, when distributed according to $q$, of the sources associated with all potentially honest sensors. The supremum over all such $\bar{q}$s is the achievable sum rate.

For example, suppose $\Hmsc=\Hmsc_{m-1}$. That is, at most one sensor is honest. Then the traitors are able to create the distribution $q(x_1\cdots x_m)=p(x_1)\cdots p(x_m)$ no matter what group of $m-1$ sensors are the traitors. Thus every sensor appears as if it could be the honest one, so the minimum achievable sum rate is
\beq H(X_1)+\cdots +H(X_m).\label{eq:tm1}\eeq
In other words, the decoder must use an independent source code for each sensor, which requires receiving $nH(X_i)$ bits from sensor $i$ for all $i$.

The achievable fixed-rate regions are based on the Slepian-Wolf achievable rate region. For randomized fixed-rate coding, the achievable region is such that for all $\Scal\in\Hmsc$, the rates associated with the sensors in $\Scal$ fall into the Slepian-Wolf rate region on the corresponding random variables. Note that for $\Hmsc=\{\{1,\cdots,m\}\}$, this is identical to the Slepian-Wolf region. For $\Hmsc=\Hmsc_{m-1}$, this region is such that for all $i$, $R_i\ge H(X_i)$, which corresponds to the sum rate in \eqref{eq:tm1}. The deterministic fixed-rate achievable region is a subset of that of randomized fixed-rate, but with an additional constraint stated in Section~\ref{section:fixedrate}.

\subsection{Randomization}\label{subsection:random}

Randomization plays a key role in defeating Byzantine attacks. As we have discussed, allowing randomized encoding in the fixed-rate situation expands the achievable region. In addition, the variable-rate coding scheme that we propose relies heavily on randomization to achieve small probability of error. In both fixed and variable-rate coding, randomization is used as follows. Every time a sensor transmits, it randomly chooses from a group of essentially identical encoding functions. The index of the chosen function is transmitted to the decoder along with its output. Without this randomization, a traitor that transmits before an honest sensor $i$ would know exactly the messages that sensor $i$ will send. In particular, it would be able to find fake sequences for sensor $i$ that would produce those same messages. If the traitor tailors the messages it sends to the decoder to match one of those fake sequences, when sensor $i$ then transmits, it would appear to corroborate this fake sequence, causing an error. By randomizing the choice of encoding function, the set of sequences producing the same message is not fixed, so a traitor can no longer know with certainty that a particular fake source sequence will result in the same messages by sensor $i$ as the true one. This is not unlike Wyner's wiretap channel \cite{Wyner:75BSTJ}, in which information is kept from the wiretapper by introducing additional randomness. See in particular Section~\ref{subsection:errorprob} for the proof that variable-rate randomness can defeat the traitors in this manner.

%In both variable-rate and randomized fixed-rate coding, we assume that the traitors know nothing about randomness produced at an honest sensor. As described above, the relevant randomness is always transmitted to the decoder as soon as it is needed. Of course, once the randomness has been transmitted, the traitors may have access to that information, which is what we assume in the variable-rate case. However, for the fixed-rate setup, there is no notion of a transmission order, so it would be meaningless to say that the traitors only know about the randomness ``after'' it has been transmitted. The only choice is to assume that the traitors never find out anything about the randomness. This might be a realistic assumption if every sensor transmits simultaneously, or if the traitors are not able to eavesdrop on transmissions to the decoder, but we believe that in most cases it is not. Hence deterministic fixed-rate coding is more realistic, but we still present the randomized version for its theoretical interest.

The rest of the paper is organized as follows. 
In Section~\ref{section:example}, we develop in detail the case that there are three sensors and one traitor, describing a coding scheme that achieves the optimum sum rate. 
In Section~\ref{section:model}, we formally give the variable-rate model and present the variable-rate result. 
In Section~\ref{section:comments}, we discuss the variable-rate achievable rate region and give an analytic formulation for the minimum achievable sum rate for some special cases. 
In Section~\ref{section:fixedrate}, we give the fixed-rate models and present the fixed-rate result. In Sections~\ref{section:proof} and~\ref{section:prooffixed}, we prove the variable-rate and fixed-rate results respectively. Finally, in Section~\ref{section:conclusion}, we conclude.

\section{Three Sensor Example}\label{section:example}

\subsection{Potential Traitor Techniques}

For simplicity and motivation, we first explore the three-sensor case with one traitor. That is, $m=3$ and \[\Hmsc=\{\{1,2\},\{2,3\},\{1,3\}\}.\]
Suppose also that the traitor has access to perfect information. Consider first the simple case where the $X_i$ can be decomposed as
\begin{align*}
X_1&=(Y_1,Y_{12},Y_{13},Y_{123}),\\
X_2&=(Y_2,Y_{12},Y_{23},Y_{123}),\\
X_3&=(Y_3,Y_{13},Y_{23},Y_{123})
\end{align*}
where $Y_1,Y_2,Y_3,Y_{12},Y_{13},Y_{23},Y_{123}$ are independent. Suppose the traitor is sensor 3. It can generate a new, independent version of $Y_{23}$, call it $Y'_{23}$, and then form $X'_3=(Y_1,Y_{13},Y'_{23},Y_{123})$. We claim that if sensor 3 now behaves for the rest of the coding session as if this counterfeit $X'_3$ were the real value, then the decoder will not be able to determine the traitor's identity. This is because both $(X_1,X_2)$ and $(X_2,X'_3)$ look like they could be a true pair, since all information that they share matches. Thus the decoder cannot know which of sensors 1 or 3 is the traitor, and which of $Y_{23}$ or $Y'_{23}$ is the truth, so it must obtain estimates of them both. To construct estimates of all three variables, every piece except $Y_{23}$ must be received only once, but the two versions $Y_{23}$ must be received separately. Therefore the sum rate must be at least
\beq H(X_1X_2X_3)+H(Y_{23})=H(X_1X_2X_3)+I(X_2;X_3|X_1).\label{eq:simpleexample}\eeq
In fact, this last expression holds for general distributions as well, as we demonstrate next.

Now take any distribution $p$, again with sensor 3 as the traitor. Sensors 1 and 2 will behave honestly, so they will report $X_1$ and $X_2$ correctly, as distributed according to the marginal distribution $p(x_1x_2)$. Since sensor 3 has access to the exact values of $X_1$ and $X_2$, it may simulate the conditional distribution $p(x_3|x_2)$, then take the resulting $X_3$ sequence and report it as the truth. Effectively, then, the three random variables will be distributed according to the distribution 
\[q(x_1x_2x_3)\triangleq p(x_1x_2)p(x_3|x_2).\] 
The decoder will be able to determine that sensors 1 and 2 are reporting jointly typical sequences, as are sensors 2 and 3, but not sensors 1 and 3. Therefore, it can tell that either sensor 1 or 3 is the traitor, but not which one, so it must obtain estimates of the sources from all three sensors. Since the three streams are not jointly typical with respect to the source distribution $p(x_1x_2x_3)$, standard Slepian-Wolf coding on three encoders will not correctly decode them all. However, had we known the strategy of the traitor, we could do Slepian-Wolf coding with respect to the distribution $q$. This will take a sum rate of
\[H_q(X_1X_2X_3)=H(X_1X_2X_3)+I(X_1;X_3|X_2)\]
where $H_q$ is the entropy with respect to $q$. In fact we will not do Slepian-Wolf coding with respect to $q$ but rather something slightly different that gives the same rate. Observe that this matches \eqref{eq:simpleexample}. Since Slepian-Wolf coding without traitors can achieve a sum rate of $H(X_1X_2X_3)$, we have paid a penalty of $I(X_1;X_3|X_2)$ for the single traitor. 

We supposed that sensor 3 simulated the distribution $p(x_3|x_2)$. It could have just as easily simulated $p(x_3|x_1)$, or another sensor could have been the traitor. Hence, the minimum achievable sum rate  for all $\Hcal\in\Hmsc$ is at least
\begin{multline}R^{*}\triangleq H(X_1X_2X_3)+\max\{I(X_1;X_2|X_3),\\I(X_1;X_3|X_2),I(X_2;X_3|X_1)\}.\label{eq:3rate}\end{multline}
In fact, this is exactly the minimum achievable sum rate, as shown below.

\subsection{Variable-Rate Coding Scheme}

We now give a variable-rate coding scheme that achieves $R^*$. This scheme is somewhat different from the one we present for the general case in Section~\ref{section:proof}, but it is much simpler, and it illustrates the basic idea. The procedure will be made up of a number of rounds. Communication from sensor $i$ in the first round will be based solely on the first $n$ values of $X_i$, in the second round on the second $n$ values of $X_i$, and so on. The principle advantage of the round structure is that the decoder may hold onto information that is carried over from one round to the next.

In particular, the decoder maintains a collection $\Vmsc\subset\Hmsc$ representing the sets that could be the set of honest sensors. If a sensor is completely eliminated from $\Vmsc$, that means it has been identified as the traitor. We begin with $\Vmsc=\Hmsc$, and then remove a set from $\Vmsc$ whenever we find that the messages from the corresponding pair of sensors are not jointly typical. With high probability, the two honest sensors report jointly typical sequences, so we expect never to eliminate the honest pair from $\Vmsc$. If the traitor employs the $q$ discussed above, for example, we would expect sensors 1 and 3 to report atypical sequences, so we will drop $\{1,3\}$ from $\Vmsc$. In essence, the value of $\Vmsc$ contains our current knowledge about what the traitor is doing.

The procedure for a round is as follows. If $\Vmsc$ contains $\{\{1,2\},\{1,3\}\}$, do the following:
\begin{enumerate}
\item Receive $nH(X_1)$ bits from sensor 1 and decode $x_1^n$.
\item Receive $nH(X_2|X_1)$ bits from sensor 2. If there is a sequence in $\Xcal_2^n$ jointly typical with $x_1^n$ that matches this transmission, decode that sequence to $x_2^n$. If not, receive $nI(X_1;X_2)$ additional bits from sensor 2, decode $x_2^n$, and remove $\{1,2\}$ from $\Vmsc$.
\item Do the same with sensor 3: Receive $nH(X_3|X_1)$ bits and decode $x_3^n$ if possible. If not, receive $nI(X_1;X_3)$ additional bits, decode, and remove $\{1,3\}$ from $\Vmsc$.
\end{enumerate}
If $\Vmsc$ is one of the other two subsets of $\Hmsc$ with two elements, perform the same procedure but replace sensor 1 with whichever sensor appears in both elements in $\Vmsc$. If $\Vmsc$ contains just one element, then we have exactly identified the traitor, so ignore the sensor that does not appear and simply do Slepian-Wolf coding on the two remaining sensors.

Note that the only cases when the number of bits transmitted exceeds $nR^*$ are when we receive a second message from one of the sensors, which happens exactly when we eliminate an element from $\Vmsc$. Assuming the source sequences of the two honest sensors are jointly typical, this can occur at most twice, so we can always achieve a sum rate of $R^*$ when averaged over enough rounds.

\subsection{Fixed-Rate Coding Scheme}

In the procedure described above, the number of bits sent by a sensor changes from round to round. We can no longer do this with fixed-rate coding, so we need a different approach. Suppose sensor 3 is the traitor. It could perform a black hole attack, in which case the estimates for $X_1^n$ and $X_2^n$ must be based only on the messages from sensors 1 and 2. Thus, the rates $R_1$ and $R_2$ must fall into the Slepian-Wolf achievability region for $X_1$ and $X_2$. Similarly, if one of the other sensors was the traitor, the other pairs of rates also must fall into the corresponding Slepian-Wolf region. Putting these conditions together gives
\beq
\begin{gathered}
\begin{aligned}
R_1&\ge \max\{H(X_1|X_2),H(X_1|X_3)\}\\
R_2&\ge \max\{H(X_2|X_1),H(X_2|X_3)\}\\
R_3&\ge \max\{H(X_3|X_1),H(X_3|X_2)\}
\end{aligned}\\
\begin{aligned}
R_1+R_2&\ge H(X_1X_2)\\
R_1+R_3&\ge H(X_1X_3)\\
R_2+R_3&\ge H(X_2X_3).
\end{aligned}
\end{gathered}\label{eq:3frregion}
\eeq
If the rates fall into this region, we can do three simultaneous Slepian-Wolf codes, one on each pair of sensors, thereby constructing two estimates for each sensor. If we randomize these codes using the method described in Section~\ref{subsection:random}, the traitor will be forced either to report the true message, or report a false message, which with high probability will be detected as such. Thus either the two estimates for each sensor will be the same, in which case we know both are correct, or one of the estimates will be demonstrably false, in which case the other is correct.

We now show that the region given by \eqref{eq:3frregion} does not include sum rates as low as $R^*$. Assume without loss of generality that $I(X_1;X_2|X_3)$ achieves the maximum in \eqref{eq:3rate}. Summing the last three conditions in \eqref{eq:3frregion} gives
\begin{multline}
R_1+R_2+R_3\ge \frac{1}{2}\big(H(X_1X_2)+H(X_1X_3)+H(X_2X_3)\big)\\
=H(X_1X_2X_3)+\frac{1}{2}\big(I(X_1;X_2|X_3)+I(X_1X_2;X_3)\big).
\label{eq:fixed3rate}
\end{multline}
If $I(X_1X_2;X_3)>I(X_1;X_2|X_3)$, \eqref{eq:fixed3rate} is larger than \eqref{eq:3rate}. Hence, there exist source distributions for which we cannot achieve the same sum rates with even randomized fixed-rate coding as with variable-rate coding.

If we are interested only in deterministic codes, the region given by \eqref{eq:3frregion} can no longer be achieved. In fact, we will prove in Section~\ref{section:prooffixed} that the achievable region reduces to the trivially achievable region where $R_i\ge H(X_i)$ for all $i$ when $m=3$, though it is nontrivial for $m>3$. For example, suppose $m=4$ and $\Hmsc=\Hmsc_1$. In this case, the achievable region is similar to that given by \eqref{eq:3frregion}, but with an additional sensor. That is, each of the 6 pairs of rates must fall into the corresponding Slepian-Wolf region. In this case, we do three simultaneous Slepian-Wolf codes for each sensor, construct three estimates, each associated with one of the other sensors. For an honest sensor, only one of the other sensors could be a traitor, so at least two of these estimates must be correct. Thus we need only take the plurality of the three estimates to obtain the correct estimate.

\section{Variable-Rate Model and Result}\label{section:model}

\subsection{Notation}

Let $X_i$ be the random variable revealed to sensor $i$, $\Xcal_i$ the alphabet of that variable, and $x_i$ a corresponding realization. A sequence of random variables revealed to sensor $i$ over $n$ timeslots is denoted $X_i^n$, and a realization of it $x_i^n\in\Xcal_i^n$. Let $\Mcal\triangleq\{1,\cdots,m\}$. For a set $\Scal\subset\Mcal$, let $X_\Scal$ be the set of random variables $\{X_i\}_{i\in \Scal}$, and define $x_\Scal$ and $\Xcal_\Scal$ similarly. By $\Scal^c$ we mean $\Mcal\backslash\Scal$. Let $T_\eps^n(X_\Scal)[q]$ be the strongly typical set with respect to the distribution $q$, or the source distribution $p$ if unspecified. Similarly, $H_q(X_\Scal)$ is the entropy with respect to the distribution $q$, or $p$ if unspecified.

\subsection{Communication Protocol}

The transmission protocol is composed of $L$ transactions. In each transaction, the decoder selects a sensor to receive information from and selects which of $K$ encoding functions it should use. The sensor then responds by executing that encoding function and transmitting its output back to the decoder, which then uses the new information to begin the next transaction. 

%Note that this set of $K$ encoding functions is different from the set of essentially identical encoding functions from which we choose randomly, as described in Section~\ref{subsection:random}. These $K$ functions are potentially randomized encoding functions, of which the decoder chooses one for the sensor to execute. This execution may involve randomly choosing from a set of subfunctions, as it will in our achievability proof, although it need not be performed that way in general.

For each sensor $i\in\Mcal$ and encoding function $j\in\{1,\cdots,K\}$, there is an associated rate $R_{i,j}$. On the $l$th transaction, let $i_l$ be the sensor and $j_l$ the encoding function chosen by the decoder, and let $h_l$ be the number of $l'\in\{1,\cdots,l-1\}$ such that $i_{l'}=i_l$. That is, $h_l$ is the number of times $i_l$ has transmitted prior to the $l$th transaction. Note that $i_l,j_l,h_l$ are random variables, since they are chosen by the decoder based on messages it has received, which depend on the source values. The $j$th encoding function for sensor $i$ is given by
\beq f_{i,j}:\Xcal_i^n\times\Zcal\times\{1,\cdots,K\}^{h_l}\to\{1,\cdots,2^{nR_{i,j}}\}\label{eq:encoding}\eeq
where $\Zcal$ represents randomness generated at the sensor. Let $I_l\in\{1,\cdots,2^{nR_{i_l,j_l}}\}$ be the message received by the decoder in the $l$th transaction. If $i_l$ is honest, then $I_l=f_{i_l,j_l}(X_{i_l}^n,\rho_{i_l},J_l)$, where $\rho_{i_l}\in\Zcal$ is the randomness from sensor $i_l$ and $J_l\in\{1,\cdots,K\}^{h_l}$ is the history of encoding functions used by sensor $i_l$ so far. If $i_l$ is a traitor, however, it may choose $I_l$ based on $W^n$ and it may have any amount of access to previous transmissions $I_1,\cdots,I_{l-1}$ and polling history $i_1,\cdots,i_{l-1}$ and $j_1,\cdots,j_{l-1}$. But, it does not have access to the randomness $\rho_i$ for any honest sensor $i$. Note again that the amount of traitor eavesdropping ability has no effect on achievable rates.

After the decoder receives $I_l$, if $l<L$ it uses $I_1,\cdots,I_l$ to choose the next sensor $i_{l+1}$ and its encoding function index $j_{l+1}$. After the $L$th transaction, it decodes according to the decoding function
\[g:\prod_{l=1}^L \{1,\cdots,2^{nR_{i_l,j_l}}\}\to \Xcal_1^n\times\cdots\times\Xcal_m^n.\]
Note that we impose no restriction whatsoever on the size of the total number of transactions $L$. Thus, a code could have arbitrary complexity in terms of the number of messages passed between the sensors and the decoder. However, in our below definition of achievability, we require that the communication rate from sensors to decoder always exceeds that from decoder to sensors. Therefore while the number of messages may be very large, the amount of feedback is dinimishingly small.

\subsection{Variable-Rate Problem Statement and Main Result}\label{subsection:statement}

Let $\Hcal\subset\Mcal$ be the set of honest sensors. Define the probability of error
\[P_e\triangleq\Pr\big(X_\Hcal^n\ne \hat{X}_\Hcal^n\big)\]
where $(\hat{X}_1^n,\cdots,\hat{X}_m^n)=g(I_1,\cdots,I_L)$. The probability of error will in general depend on the actions of the traitors. Note again that we only require small probability of error on the source estimates corresponding to the honest sensors.

We define a rate function $R(\Hcal,r)$ defined for $\Hcal\in\Hmsc$ and $r\in \Rcal(\Hcal)$ to be \emph{$\alpha$-achievable} if there exists a code such that, for all pairs $(\Hcal,r)$ and any choice of actions by the traitors, $P_e\le\alpha$,
\[\Pr\bigg(\sum_{l=1}^L R_{i_l,j_l}\le R(\Hcal,r)\bigg)\ge 1-\alpha\]
and $\log K\le\alpha nR_{i,j}$ for all $i,j$. This last condition requires, as discussed above, that the feedback rate from the decoder back to the sensors is arbitrarily small compared to the forward rate. A rate function $R(\Hcal,r)$ is \emph{achievable} if for all $\alpha>0$, there is a sequence of $\alpha$-achievable rate functions $\{R'_k(\Hcal,r)\}_{k=1}^\infty$ such that
\[\lim_{k\to\infty}R'_k(\Hcal,r)=R(\Hcal,r).\]
Note that we do not require uniform convergence.

The following definitions allow us to state our main variable-rate result. For any $\Hcal\in\Hmsc$ and $r\in\Rcal(\Hcal)$, let
\[\tilde{r}(w|x_\Hcal)\triangleq\sum_{x_{\Hcal^c}\in\Xmsc_{\Hcal^c}}p(x_{\Hcal^c}|x_\Hcal)r(w|x_\Hcal x_{\Hcal^c}).\]
The extent to which $W$ provides information about $X_{\Hcal^c}$ is irrelevant to the traitors, since all that really matters to the traitors is generating information that appears to agree with $X_{\Hcal}$ as reported by the honest sensors. Thus it will usually be more convenient to work with $\tilde{r}$ rather than $r$. For any $\Scal\in\Hmsc$ and $r'\in \Rcal(\Scal)$, let
\[\Qcal_{\Scal,r'}\triangleq\bigg\{p(x_\Scal)\sum_w\tilde{r}'(w|x_\Scal)\bar{q}(x_{\Scal^c}|w):\forall\bar{q}(x_{\Scal^c}|w)\bigg\}.\]
If $\Scal^c$ were the traitors and $W$ were distributed according to $r'$, $\Qcal_{\Scal,r'}$ is the set of distributions $q$ to which the traitors would have access. That is, if they simulate the proper $\bar{q}(x_{\Scal^c}|w)$ from their received $W$ and combine the result with the actual value of $x_\Scal$, the combination is distributed according to $q$. For any $\Vmsc\subset\Hmsc$, define
\[\Qcal(\Vmsc)\triangleq \bigcap_{\Scal\in \Vmsc}\bigcup_{r'\in \Rcal(\Scal)}\Qcal_{\Scal,r'}.\]
That is, for some distribution $q\in\Qcal(\Vmsc)$, for every $\Scal\in\Vmsc$, if the traitors were $\Scal^c$, they would have access to $q$ for some $r'\in\Rcal(\Scal)$. Thus any distribution in $\Qcal(\Vmsc)$ makes it look to the decoder like any $\Scal\in\Vmsc$ could be the set of honest sensors, so any sensor in $\Ucal(\Vmsc)\triangleq \bigcup_{\Scal\in \Vmsc}\Scal$ is potentially honest.

\begin{theorem}\label{thm:variablerate}
A rate function $R(\Hcal,r)$ is achievable if and only if, for all $(\Hcal,r)$,
\beq R(\Hcal,r)\ge R^*(\Hcal,r)\triangleq \sup_{\Vmsc\subset\Hmsc,\ q\in \Qcal_{\Hcal,r}\cap \Qcal(\Vmsc)}H_q(X_{\Ucal(\Vmsc)}).\label{eq:thm}\eeq
\end{theorem}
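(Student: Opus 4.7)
The plan is to prove the two directions separately. For the converse I fix $(\Hcal,r)$ together with any $\Vmsc\subset\Hmsc$ and any $q\in\Qcal_{\Hcal,r}\cap\Qcal(\Vmsc)$, and aim to show that any achievable rate function must satisfy $R(\Hcal,r)\ge H_q(X_{\Ucal(\Vmsc)})$. Because $q\in\Qcal(\Vmsc)$, for every $\Scal\in\Vmsc$ there exist $r_\Scal\in\Rcal(\Scal)$ and a conditional $\bar{q}_\Scal(x_{\Scal^c}|w)$ with $q(x_1\cdots x_m)=p(x_\Scal)\sum_w\tilde{r}_\Scal(w|x_\Scal)\bar{q}_\Scal(x_{\Scal^c}|w)$. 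I construct one parallel scenario $A_\Scal$ for each $\Scal\in\Vmsc$: in $A_\Scal$ the honest set is $\Scal$, the side-information distribution is $r_\Scal$, and the traitors $\Scal^c$ sample a fake $\tilde{X}_{\Scal^c}^n$ letter-by-letter from $\bar{q}_\Scal(\cdot|W_l)$ and then execute the honest protocol on that fake data with fresh randomness. A direct marginalization shows that in every $A_\Scal$ the joint distribution of $(X_\Scal^n,\tilde{X}_{\Scal^c}^n)$ is $q^n$, and consequently the distribution of the decoder's received messages $(I_1,\ldots,I_L)$ is identical across the scenarios. Since the code must satisfy $P_e\le\alpha$ in each $A_\Scal$ and the decoder's output is a deterministic function of $(I_1,\ldots,I_L)$, the decoder must simultaneously decode $X_\Scal^n$ for every $\Scal\in\Vmsc$, hence recover $X_{\Ucal(\Vmsc)}^n$ from the common $q^n$-distributed data. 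A Fano-style bound then forces the total received bits to be at least $n(H_q(X_{\Ucal(\Vmsc)})-o(1))$ with high probability, and taking the supremum over $(\Vmsc,q)$ completes the converse.

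For achievability I would generalize the round-based scheme of Section~\ref{section:example}. The decoder maintains a list $\Vmsc\subset\Hmsc$ of still-plausible honest sets, initialized to $\Hmsc$ and updated across rounds. Within a round it polls the sensors one at a time, requesting incremental packets of bits from each and taking as the estimate the first sequence that is jointly typical, under some $q\in\Qcal(\Vmsc)$, with the sequences already decoded. When no such sequence exists, or when two already-decoded sequences are mutually inconsistent with every $q\in\Qcal(\Vmsc)$, the sets incompatible with the observation are pruned from $\Vmsc$ and the extra bits are charged as overhead. To protect honest sensors that have yet to speak from traitors who already have, each transmission randomly selects one of $K$ essentially-equivalent encoding functions and reveals the chosen index only when it is used, as discussed in Section~\ref{subsection:random}.

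Two observations drive the rate analysis. First, in any round where $\Vmsc$ is not pruned the number of transmitted bits is at most $n\sup_{q\in\Qcal(\Vmsc)\cap\Qcal_{\Hcal,r}}H_q(X_{\Ucal(\Vmsc)})+o(n)\le nR^*(\Hcal,r)+o(n)$, because whatever $q$ the traitors effectively simulate necessarily lies in $\Qcal_{\Hcal,r}$ and, by the absence of pruning, also in $\Qcal(\Vmsc)$. Second, since $\Hmsc$ is finite and $|\Vmsc|$ strictly decreases with every pruning, at most $|\Hmsc|$ rounds incur the extra overhead, so amortizing over sufficiently many rounds drives the average rate to $R^*(\Hcal,r)+\eps$ for any $\eps>0$. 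The principal obstacle will be the probabilistic argument that the decoder's error probability vanishes in the face of a fully-informed adversary: the traitors may pre-commit to a fake source sequence and fabricate consistent messages before the honest sensors transmit. Because the encoding-function index selected by an honest sensor is revealed only at the moment of its transmission, the probability that any particular fake sequence produces the same message as the true one can be made inverse-exponential in $n$; a careful choice of $K$ keeps $\log K$ small compared to $nR_{i,j}$, preserving the zero-rate feedback constraint, while still yielding enough entropy to union-bound over all fake sequences, all rounds, and all subsets of $\Hmsc$. Orchestrating this union bound alongside the typicality-based pruning tests is the step I expect to be the main technical hurdle, extending the three-sensor argument of Section~\ref{section:example} to arbitrary $\Hmsc$.
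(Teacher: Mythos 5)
Your converse is correct and is essentially the paper's, made more explicit: the paper simply asserts that the decoder cannot distinguish the members of $\Vmsc$, whereas you spell out the parallel scenarios $A_\Scal$ and the Fano step, which is a sound formalization. Your achievability outline likewise follows the paper's round-based scheme (track $\Vmsc$, incremental packets, prune and charge overhead, amortize over rounds, protect honest sensors via randomized subcodebook selection).

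However, there is a genuine gap in the rate analysis, and it is hiding in the phrase ``whatever $q$ the traitors effectively simulate necessarily lies in $\Qcal_{\Hcal,r}$.'' You treat this as obvious. It is obvious only when the traitors have perfect source information, because then $\Qcal_{\Hcal,r}$ is simply $\{q : q(x_\Hcal)=p(x_\Hcal)\}$, and the fact that $\hat{X}_\Hcal^n$ was decoded correctly already pins down the marginal type on $\Hcal$. For imperfect side information $r$, the set $\Qcal_{\Hcal,r}$ is strictly smaller: the type of $(X_\Hcal^n,\hat{X}_{\Tcal}^n)$ must factor through the channel $\tilde{r}(w|x_\Hcal)$. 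A traitor who only observes a noisy $W^n$ could in principle gamble, choosing bin indices that, if consistent with the honest sensors, correspond to a type outside $\Qcal_{\Hcal,r}$ (and thus with larger conditional entropies), inflating the rate above $R^*(\Hcal,r)$. Ruling this out is precisely what the paper's Section~\ref{subsection:imperfect} does: it introduces the event $\Ecal_3$, and the argument requires (i) passing from the traitors' ``bin-valued'' strategies $g_\j$ to a single-sequence function $h$ via the codebook regularity set $\Ccal_2$, and (ii) Lemma~\ref{lemma:kl}, a Kullback--Leibler bound relating the type of the decoded sequences to the averaged marginal of the traitors' simulation. Neither piece appears in your plan; without them, your claimed per-round rate bound is unproven in the general case the theorem covers. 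Relatedly, the ``main technical hurdle'' you flag (traitors pre-committing to a fake sequence before an honest sensor speaks) is the comparatively easy part --- it is a Chernoff/union bound over subcodebook choices --- whereas the imperfect-information type-control argument is where the real work lies.
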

See Section~\ref{section:proof} for the proof.

\section{Properties of the Variable-Rate Region}\label{section:comments}

It might at first appear that \eqref{eq:corollary} does not agree with \eqref{eq:3rate}. We discuss several ways in which \eqref{eq:thm}  and \eqref{eq:corollary} can be made more manageable, particularly in the case of perfect traitor information, and show that the two are in fact identical. Let $R^*$ be the minimum rate achievable over all $\Hcal\in\Hmsc$ and $r\in\Rcal(\Hcal)$. Thus by \eqref{eq:thm}, we can write
\beq R^*=\sup_{\Hcal\in\Hmsc,r\in\Rcal(\Hcal)}R^*(\Hcal,r)=\sup_{\Vmsc\subset\Hmsc,\ q\in \Qcal(\Vmsc)}H_q(X_{\Ucal(\Vmsc)}).\label{eq:corollary}\eeq
This is the quantity that appears in \eqref{eq:3rate}. Note also that for perfect traitor information,
\beq\Qcal_{\Scal,r'}=\{q(x_\Mcal):q(x_{\Scal})=p(x_{\Scal})\}.\label{eq:ptiq}\eeq
This means that $\Qcal_{\Hcal,r}\cap \Qcal(\Vmsc)=\Qcal(\Vmsc\cup\{\Hcal\})$. Therefore \eqref{eq:thm} becomes
\[R^*(\Hcal,r)=\sup_{\Vmsc\subset\Hmsc:\Hcal\in\Vmsc,\ q\in\Qcal(\Vmsc)}H_q(X_{\Ucal(\Vmsc)}).\]
The following lemma simplifies calculation of expressions of the form $\sup_{q\in\Qcal(\Vmsc)}H_q(X_{\Ucal(\Vmsc)})$.
\begin{lemma}\label{lemma:form}
Suppose the traitors have perfect information. For any $\Vmsc\subset\Hmsc$, the expression
\beq\sup_{q\in \Qcal(\Vmsc)}H_q(X_{\Ucal(\Vmsc)})\label{eq:lemmaform}\eeq
is maximized by a $q$ satisfying \eqref{eq:ptiq} for all $\Scal\in\Vmsc$ such that, for some set of functions $\{\sigma_\Scal\}_{\Scal\in\Vmsc}$,
\beq q(x_1\cdots x_m)=\prod_{\Scal\in\Vmsc}\sigma_\Scal(x_\Scal).\label{eq:lemmaform2}\eeq
\end{lemma}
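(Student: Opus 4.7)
The plan is to recast the supremum as a classical constrained maximum-entropy problem and then invoke the standard exponential-family characterization of its solution. First I would observe that under perfect traitor information \eqref{eq:ptiq} gives
\[\Qcal(\Vmsc)=\bigcap_{\Scal\in\Vmsc}\{q(x_\Mcal):q(x_\Scal)=p(x_\Scal)\},\]
and that $H_q(X_{\Ucal(\Vmsc)})$ depends on $q$ only through its marginal on $X_{\Ucal(\Vmsc)}$. Since every $\Scal\in\Vmsc$ satisfies $\Scal\subseteq\Ucal(\Vmsc)$, the task reduces to maximizing $H(\bar{q})$ over probability distributions $\bar{q}$ on $\Xcal_{\Ucal(\Vmsc)}$ subject to the linear equality constraints $\bar{q}(x_\Scal)=p(x_\Scal)$ for each $\Scal\in\Vmsc$. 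This feasible set is nonempty (the source marginal is always feasible) and compact, and entropy is continuous and strictly concave on it, so a unique maximizer $\bar{q}^*$ exists.

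Next I would apply the maximum-entropy principle. The program is a concave maximization subject to linear equality constraints, and the KKT conditions imply that at an interior optimum the maximizer takes the exponential-family form
\[\bar{q}^*(x_{\Ucal(\Vmsc)})=\frac{1}{Z}\exp\bigg(\sum_{\Scal\in\Vmsc}\lambda_\Scal(x_\Scal)\bigg),\]
where the $\lambda_\Scal$ are Lagrange multiplier functions dual to the marginal constraints and $Z$ is the normalizer. Defining $\sigma_\Scal(x_\Scal)\triangleq\exp(\lambda_\Scal(x_\Scal))$ for all but one distinguished $\Scal_0\in\Vmsc$ and absorbing $1/Z$ into $\sigma_{\Scal_0}$ yields $\bar{q}^*(x_{\Ucal(\Vmsc)})=\prod_{\Scal\in\Vmsc}\sigma_\Scal(x_\Scal)$, which is exactly \eqref{eq:lemmaform2}. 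To lift this to a full joint $q^*(x_\Mcal)\in\Qcal(\Vmsc)$ with the same product structure, I would extend $\bar{q}^*$ by the conditional $p(x_{\Mcal\setminus\Ucal(\Vmsc)}|x_{\Ucal(\Vmsc)})$ and fold that conditional into one of the $\sigma_\Scal$ factors; this leaves the marginal on $X_{\Ucal(\Vmsc)}$, all constraints, and $H_{q^*}(X_{\Ucal(\Vmsc)})$ unchanged.

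The main obstacle I anticipate is the boundary case in which some entries of $\bar{q}^*$ vanish, where the interior KKT argument provides multipliers only in an extended-real sense. The workaround is to approximate $p$ by a strictly positive sequence $p_\epsilon\to p$, apply the interior KKT argument to each perturbed problem to obtain product-form maximizers $\bar{q}^*_\epsilon$, and let $\epsilon\to 0$. By strict concavity of entropy, the maximizer varies continuously with the marginal constraints, so $\bar{q}^*_\epsilon\to\bar{q}^*$; since \eqref{eq:lemmaform2} only requires nonnegative-valued $\sigma_\Scal$ whose product is a probability distribution, and the class of such product-form distributions is closed under pointwise limits, the limiting $\bar{q}^*$ retains the required factorization.
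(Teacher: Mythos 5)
Your approach is essentially the same as the paper's: both reduce the problem to maximizing $H(\bar q)$ over distributions on $\Xcal_{\Ucal(\Vmsc)}$ subject to the linear marginal constraints $\bar q(x_\Scal)=p(x_\Scal)$, and both extract the product structure from the Lagrangian/KKT stationarity condition $-\log \bar q(x_{\Ucal(\Vmsc)}) - 1 + \sum_{\Scal\in\Vmsc}\lambda_\Scal(x_\Scal)=0$. The preliminary reduction, existence and uniqueness of $\bar q^*$, and the exponential-family derivation are all correct and match the paper closely.

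Two issues worth flagging. First, your lift from $\bar q^*$ on $X_{\Ucal(\Vmsc)}$ to a full joint $q^*$ on $X_\Mcal$ does not work as stated: you propose to extend by the conditional $p(x_{\Mcal\setminus\Ucal(\Vmsc)}\mid x_{\Ucal(\Vmsc)})$ and ``fold that conditional into one of the $\sigma_\Scal$ factors,'' but this conditional depends on $x_{\Mcal\setminus\Ucal(\Vmsc)}$, and no $\Scal\in\Vmsc$ contains any index outside $\Ucal(\Vmsc)$. Hence the resulting $q^*$ cannot be written as $\prod_{\Scal\in\Vmsc}\sigma_\Scal(x_\Scal)$ unless $\Ucal(\Vmsc)=\Mcal$. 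The paper avoids this by extending uniformly, setting $q(x_\Mcal)=\bar q^*(x_{\Ucal(\Vmsc)})/|\Xcal_{\Ucal(\Vmsc)^c}|$, and absorbing the constant $1/|\Xcal_{\Ucal(\Vmsc)^c}|$ into a single $\sigma_\Scal$; this is the correct move and should replace your conditional extension.

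Second, in the boundary-case argument you assert that the class of product-form distributions is closed under pointwise limits. This is false in general: the uniform distribution on even-parity strings of $\{0,1\}^3$ has all pairwise marginals uniform, is a limit of strictly positive pairwise-product distributions, yet admits no factorization $\sigma_{12}\sigma_{23}\sigma_{13}$ (any such factorization forces positivity of the odd-parity cells). So the final step of your limiting argument does not follow from the stated closedness claim. To be fair, the paper's own proof simply applies the stationarity condition without attending to the possibility that $\bar q^*$ lies on the boundary of the simplex, so you are not worse off; but since you explicitly set out to patch the boundary case, you should either restrict to the interior (as the paper implicitly does) or supply a correct argument, for example by invoking Csisz\'ar's I-projection theory together with the observation that the max-entropy projection is the unique feasible point at which the Pythagorean identity holds, which is a more delicate statement than closedness of the product family.
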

\begin{proof}
By \eqref{eq:ptiq}, we need to maximize $H_q(X_{\Ucal(\Vmsc)})$ subject to the constraints that for each $\Scal\in\Vmsc$ and all $x_\Scal\in\Xcal_\Scal$, $q(x_\Scal)=p(x_\Scal)$. This amounts to maximizing the Lagrangian
\begin{multline*}
\Lambda=-\sum_{x_{\Ucal(\Vmsc)}\in\Xcal_{\Ucal(\Vmsc)}}q(x_{\Ucal(\Vmsc)})\log q(x_{\Ucal(\Vmsc)})\\
+\sum_{\Scal\in\Vmsc}\sum_{x_\Scal\in\Xcal_\Scal}\lambda_\Scal(x_\Scal)\big(q(x_\Scal)-p(x_\Scal)\big).
\end{multline*}
Note that for any $\Scal\subset\Ucal(\Vmsc)$,
\[\frac{\partial q(x_\Scal)}{\partial q(x_{\Ucal(\Vmsc)})}=1.\]
Thus, differentiating with respect to $q(x_{\Ucal(\Vmsc)})$ gives, assuming the $\log$ is a natural logarithm,
\begin{align*}
\frac{\partial\Lambda}{\partial q(x_{\Ucal(\Vmsc)})}
=&-\log q(x_{\Ucal(\Vmsc)})-1+\sum_{\Scal\in\Vmsc}\lambda_\Scal(x_\Scal).
\end{align*}
Setting this to 0 gives
\[q(x_{\Ucal(\Vmsc)})=\exp\bigg(-1+\sum_{\Scal\in\Vmsc}\lambda_\Scal(x_\Scal)\bigg)=|\Xcal_{\Ucal(\Vmsc)^c}|\prod_{\Scal\in\Vmsc}\sigma_\Scal(x_\Scal)\]
for some set of functions $\{\sigma_\Scal\}_{\Scal\in\Vmsc}$. Therefore setting 
\[q(x_1\cdots x_m)=\frac{q(x_{\Ucal(\Vmsc)})}{|\Xcal_{\Ucal(\Vmsc)^c}|}\]
satisfies \eqref{eq:lemmaform2}, so if $\sigma_\Scal$ are such that \eqref{eq:ptiq} is satisfied for all $\Scal\in\Vmsc$, $q$ will maximize $H_q(X_{\Ucal(\Vmsc)})$.
\end{proof}

Suppose $m=3$ and $\Hmsc=\Hmsc_1$. If $\Vmsc=\{\{1,2\},\{2,3\}\}$, then $\tilde{q}(x_1x_2x_3)=p(x_1x_2)p(x_3|x_2)$ is in $\Qcal(\Vmsc)$ and by Lemma~\ref{lemma:form} maximizes $H_q(X_1X_2X_3)$ over all $q\in \Qcal(\Vmsc)$. Thus
\begin{align*}\sup_{q\in \Qcal(\Vmsc)}H_q(X_1X_2X_3)&=H_{\tilde{q}}(X_1X_2X_3)\\&=H(X_1X_2X_3)+I(X_1;X_3|X_2).\end{align*}
By similar reasoning, considering $\Vmsc=\{\{1,2\},\{1,3\}\}$ and $\Vmsc=\{\{1,3\},\{2,3\}\}$ results in \eqref{eq:3rate}. Note that if $\Vmsc_1\subset\Vmsc_2$, then $\Qcal(\Vmsc_1)\supset\Qcal(\Vmsc_2)$, so $\Vmsc_2$ need not be considered in evaluating $\eqref{eq:thm}$. Thus we have ignored larger subsets of $\Hmsc_1$, since the value they give would be no greater than the others.

We can generalize to any collection $\Vmsc$ of the form $\{\{\Scal_1,\Scal_2\},\{\Scal_1,\Scal_3\},\cdots,\{\Scal_1,\Scal_k\}\}$, in which case
\[\sup_{q\in\Qcal(\Vmsc)}=H(X_{\Scal_1}X_{\Scal_2})+H(X_{\Scal_3}|X_{\Scal_1})+\cdots+H(X_{\Scal_k}|X_{\Scal_1}).\]
Employing this, we can rewrite \eqref{eq:corollary} for $\Hmsc=\Hmsc_t$ and certain values of $t$. For $t=1$, it becomes
\[R^*= H(X_1\cdots X_m)+\max_{i,i'\in\Mcal}I(X_i;X_{i'}|X_{\{i,i'\}^c}).\]
Again, relative to the Slepian-Wolf result, we always pay a conditional mutual information penalty for a single traitor. For $t=2$,
\begin{multline*}
R^*= H(X_1\cdots X_m)\\+\max\left\{\max_{\Scal,\Scal'\subset\Mcal:|\Scal|=|\Scal'|=2}I(X_\Scal;X_{\Scal'}|X_{(\Scal\cup \Scal')^c}),\right.\\
\left.\max_{i,i',i''\in\Mcal}I(X_i;X_{i'};X_{i''}|X_{\{i,i',i''\}^c})\right\}
\end{multline*}
where $I(X;Y;Z|W)=H(X|W)+H(Y|W)+H(Z|W)-H(XYZ|W)$. For $t=m-1$, $R^*$ is given by \eqref{eq:tm1}. There is a similar formulation for $t=m-2$, though it is more difficult to write down for arbitrary $m$. 

With all these expressions made up of nothing but entropies and mutual informations, it might seem hopeful that \eqref{eq:lemmaform} can be reduced to such an analytic expression for all $\Vmsc$. However, this is not the case. For example, consider $\Vmsc=\{\{1,2,3\},\{3,4,5\},\{5,6,1\}\}$. This $\Vmsc$ is irreducible in the sense that there is no subset $\Vmsc'$ that still satisfies $\Ucal(\Vmsc')=\{1,\cdots,6\}$, but there is no simple distribution $q\in \Qcal(\Vmsc)$ made up of marginals of $p$ that satisfies Lemma \ref{lemma:form}, so it must be found numerically. Still, Lemma \ref{lemma:form} simplifies the calculation considerably.

\section{Proof of Theorem~\ref{thm:variablerate}}\label{section:proof}

\subsection{Converse}

We first show the converse. Fix $\Hcal\in\Hmsc$ and $r\in\Rcal(\Hcal)$. Take any $\Vmsc\subset\Hmsc$, and any distribution $q\in\Qcal_{\Hcal,r}\cap\Qcal(\Vmsc)$. Since $q\in\Qcal_{\Hcal,r}$, there is some $\bar{q}(x_\Tcal|w)$ such that $X_\Hcal$ and $X_\Tcal$ are distributed according to $q$. Since also $q\in\Qcal_{\Scal,r'}$ for all $\Scal\in\Vmsc$ and some $r'\in\Rcal(\Scal)$, if the traitors simulate this $\bar{q}$ and act honestly with these fabricated source values, the decoder will not be able to determine which of the sets in $\Vmsc$ is the actual set of honest sensors. Thus, the decoder must perfectly decode the sources from all sensors in $\Ucal(\Vmsc)$, so if $R(\Hcal,r)$ is a precisely $\alpha$-achievable rate function, $R(\Hcal,r)\ge H_q(X_{\Ucal(\Vmsc)})$.

\subsection{Achievability Preliminaries}

Now we prove achievability. To do so, we will first need the theory of types. Given $y^n\in\Ycal^n$, let $t(y^n)$ be the type of $y^n$. Given a type $t$ with denominator $n$, let $\Lambda_t^n(Y)$ be the set of all sequences in $\Ycal^n$ with type $t$. If $t$ is a joint $y,z$ type with denominator $n$, then let $\Lambda_t^n(Y|z^n)$ be the set of sequences $y^n\in\Ycal^n$ such that $(y^nz^n)$ have joint type $t$, with the convention that this set is empty if the type of $z^n$ is not the marginal of $t$.

We will also need the following definitions. Given a distribution $q$ on an alphabet $\Ycal$, define the $\eta$-ball of distributions
\[B_{\eta}(q)\triangleq\bigg\{q'(\Ycal):\forall x\in\Ycal:
|q(x)-q'(x)|\le\frac{\eta}{|\Ycal|}\bigg\}.\]
Note that the typical set can be written
\[T_\eps^n(X)=\{x^n:t(x^n)\in B_\eps(p)\}.\]
We define slightly modified versions of the sets of distributions from Section~\ref{subsection:statement} as follows:
\begin{align*}\breve{\Qcal}^\eta_{s,r'} &\triangleq \bigcup_{q\in \Qcal_{s,r'}}B_\eta(q),\\
\breve{\Qcal}^\eta(\Vmsc)&\triangleq \bigcap_{\Scal\in \Vmsc}\bigcup_{r'\in \Rcal(\Scal)}\breve{\Qcal}^\eta_{\Scal,r'}.\end{align*}

Finally, we will need the following lemma.
\begin{lemma}\label{lemma:kl} Given an arbitrary $n$ length distribution $q^n(x^n)$ and a type $t$ with denominator $n$ on $\Xcal$, let $q_i(x)$ be the marginal distribution of $q^n$ at time $i$ and $\bar{q}(x)=\frac{1}{n}\sum_{i=1}^n q_i(x)$. If $X^n$ is distributed according to $q^n$ and $\Pr(X^n\in \Lambda_t^n(X))\ge 2^{-n\zeta}$, then $D(t\|\bar{q})\le\zeta$.\end{lemma}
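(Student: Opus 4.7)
My plan is to prove the stronger bound
\[
\Pr(X^n \in \Lambda_t^n)\ \le\ 2^{-n D(t\|\bar{q})},
\]
from which the lemma follows instantly: taking base-$2$ logs of both sides of the hypothesis $\Pr(X^n\in \Lambda_t^n)\ge 2^{-n\zeta}$ yields $D(t\|\bar q)\le\zeta$.

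To get that bound I would run a standard Chernoff / exponential-tilt argument. Fix any $\lambda:\Xcal\to\Rmbb$ and observe that on the event $\{X^n\in \Lambda_t^n\}$ the sum $\sum_{i=1}^n\lambda(X_i)$ equals the \emph{deterministic} constant $n\sum_x t(x)\lambda(x)$, so
\[
e^{n\sum_x t(x)\lambda(x)}\Pr(X^n\in \Lambda_t^n)\ \le\ \Embb\Bigl[\exp\bigl(\textstyle\sum_i\lambda(X_i)\bigr)\Bigr].
\]
I would then factor the joint moment generating function as $\prod_i M_i(\lambda)$, with $M_i(\lambda)=\sum_x q_i(x)e^{\lambda(x)}$, and use Jensen's inequality (concavity of $\log$) to bound the geometric mean of the $M_i$'s by their arithmetic mean:
\[
\prod_i M_i(\lambda)\ \le\ \Bigl(\tfrac{1}{n}\sum_{i=1}^n M_i(\lambda)\Bigr)^{n}\ =\ \Bigl(\sum_x\bar q(x)e^{\lambda(x)}\Bigr)^{n}.
\]
Optimising over $\lambda$ via the Fenchel--Legendre duality that identifies $D(t\|\bar q)$ as the convex conjugate of the cumulant generating function $\lambda\mapsto\log\sum_x\bar q(x)e^{\lambda(x)}$ then converts the right-hand side into exactly $2^{-nD(t\|\bar q)}$, finishing the argument.

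The step carrying the real content is the factorisation $\Embb[\exp(\sum_i\lambda(X_i))]=\prod_i M_i(\lambda)$: this uses that $q^n$ factors across time slots as a product of its per-coordinate marginals. A correlated joint distribution on $\Xcal^n$, for example one concentrated on $\{(0,\ldots,0),(1,\ldots,1)\}$, can easily violate the strong bound $\Pr(\Lambda_t^n)\le 2^{-nD(t\|\bar q)}$, so I would treat the lemma as implicitly referring to (or being invoked in a context that supplies) an independent-but-inhomogeneous $X^n$. The Jensen step is then precisely where the heterogeneity of the per-slot marginals $q_i$ is collapsed into the single averaged marginal $\bar q$ appearing in the conclusion.
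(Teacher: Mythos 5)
Your Chernoff--tilting route is genuinely different from the paper's, and for a product $q^n$ it is correct and cleaner. The paper tensorizes: it draws $\tilde{n}$ independent copies of $X^n$ from $q^n$, views them as a length-$\tilde{n}$ string over the superalphabet $\Xcal^n$, sums over the polynomially many supertypes $t^n$ supported on $\Lambda_t^n(X)$, bounds each supertype-class probability by $2^{-\tilde{n}D(t^n\|q^n)}$, passes from $\tfrac{1}{n}D(t^n\|q^n)$ to $D(t\|\bar q)$ via a per-coordinate superadditivity step (cited to Wyner's Lemma~4.3) followed by joint convexity of $D(\cdot\|\cdot)$, and finally sends $\tilde{n}\to\infty$ to absorb the $(\tilde n+1)^{|\Xcal|^n}$ slack. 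You instead get the clean non-asymptotic bound $\Pr(\Lambda_t^n)\le 2^{-nD(t\|\bar q)}$ in one shot: a single exponential tilt, one application of Jensen to pass from the geometric to the arithmetic mean of the per-slot MGFs, and Fenchel--Legendre duality to recognize $D(t\|\bar q)$. No auxiliary block length, no types-of-types.

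The independence caveat you raise at the end, however, is not merely a limitation of your own method --- it is a gap in the lemma as stated, and you should have said so rather than quietly reading the hypothesis in. Your counterexample is sound: let $q^n$ put mass $\tfrac12$ on each of the two constant sequences in $\{0,1\}^n$ and let $t$ be the all-zeros type. Then $\Pr(\Lambda_t^n)=\tfrac12$, so $\zeta$ may be taken equal to $\tfrac1n$, while $\bar q$ is uniform and $D(t\|\bar q)=1>\tfrac1n$ for every $n\ge2$, contradicting the lemma's conclusion. The same example breaks the paper's proof at the cited superadditivity step: with $t^n$ the point mass on the all-zeros supersymbol, $\tfrac1n D(t^n\|q^n)=\tfrac1n$ whereas $\tfrac1n\sum_i D(t_i\|q_i)=1$. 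That inequality, like your MGF factorization, is a theorem only when $q^n$ factors across coordinates --- exactly the hypothesis the lemma statement omits. And the hypothesis cannot simply be read in, because the paper invokes the lemma in Section~\ref{subsection:imperfect} with $q^n$ the joint law of $(Y^n,\tilde Z^n)$ where $\tilde Z^n=h(W^n)$ deterministically, which does not factor across time slots. You found the right obstruction; the correct conclusion is that the lemma needs an additional structural hypothesis on $q^n$ and that the downstream application then needs re-examination, not that your proof is the more restrictive one.
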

\begin{proof}
Fix an integer $\tilde{n}$. For $\tilde{i}=1,\cdots,\tilde{n}$, let $X^n(\tilde{i})$ be independently generated from $q^n$. Let $\Gamma$ be the set of types $t^n$ on supersymbols in $\Xcal^n$ with denominator $\tilde{n}$ such that $t^n(x^n)=0$ if $x^n\not\in\Lambda_t^n(X)$. Note that
\[|\Gamma|\le(\tilde{n}+1)^{|\Xcal|^n}.\]
If $X^{n\tilde{n}}=(X^n(1),\cdots,X^n(\tilde{n}))$, then
\begin{align*}\Pr\Big(X^{n\tilde{n}}\in\bigcup_{t^n\in \Gamma}\Lambda_{t^n}^{\tilde{n}}(X^n)\Big)&=\Pr(X^n(\tilde{i})\in\Lambda_t^n(X),\forall \tilde{i})\\
&\ge 2^{-n\tilde{n}\zeta}.\end{align*}
But
\begin{multline*}\Pr\Big(X^{n\tilde{n}}\in\bigcup_{t^n\in T^n}\Lambda_{t^n}^{\tilde{n}}(X^n)\Big)
= \sum_{t^n\in \Gamma}\Pr(X^{n\tilde{n}}\in \Lambda_{t^n}^{\tilde{n}}(X^n)\\
\begin{aligned}&\le \sum_{t^n\in \Gamma}2^{-\tilde{n}D(t^n\|q^n)}\\
&\le (\tilde{n}+1)^{|\Xcal|^n}2^{-\tilde{n}\min_{t^n\in \Gamma} D(t^n\|q^n)}.\end{aligned}\end{multline*}
For any $t^n\in\Gamma$, letting $t_i$ be the marginal type at time $i$ gives $\frac{1}{n}\sum_{i=1}^nt_i=t$. Therefore
\begin{align}\zeta+\frac{1}{n\tilde{n}}|\Xcal|^n\log(\tilde{n}+1)&\ge\min_{t^n\in \Gamma}\frac{1}{n}D(t^n\|q^n)\nonumber\\
&\ge\min_{t^n\in \Gamma}\frac{1}{n}\sum_{i=1}^n D(t_i\|q_i)\label{eq:kl1}\\
&\ge D(t\|\bar{q})\label{kl2}\end{align}
where \eqref{eq:kl1} holds by \cite[Lemma 4.3]{Wyner:75IT} and \eqref{kl2} by convexity of the Kullback-Leibler distance in both arguments. Letting $\tilde{n}$ grow proves the lemma.
\end{proof}

The achievability proof proceeds as follows. \hbox{Section~\ref{subsection:scheme}} describes our proposed coding scheme for the case that traitors cannot eavesdrop. In Section~\ref{subsection:errorprob}, we demonstrate that this coding scheme achieves small probability of error when the traitors have perfect information. Section~\ref{subsection:coderate} shows that the coding scheme achieves the rate function $R^*(\Hcal,r)$. In Section~\ref{subsection:imperfect}, we extend the proof to include the case that the traitors have imperfect information. Finally, Section~\ref{subsection:eavesdropping} gives a modification to the coding scheme that can handle eavesdropping traitors.

\subsection{Coding Scheme Procedure}\label{subsection:scheme}

\emph{1) Random Code Structure:} Fix $\eps>0$. The codebook for sensor $i$ is composed of $CJ_i$ separate encoding functions, where $J_i=\left\lceil\frac{\log|\Xcal_i|}{\eps}\right\rceil$ and $C$ is an integer to be defined later. In particular, for $i=1,\cdots,m$ and $c=1,\cdots,C$, let
\begin{align*}
\tilde{f}_{i,c,1}&:\Xcal_i^n\to\{1,\cdots,2^{n(\eps+\nu)}\},\\
\tilde{f}_{i,c,j}&:\Xcal_i^n\to\{1,\cdots,2^{n\eps}\},\quad j=2,\cdots,J_i
\end{align*}
with $\nu$ to defined later. We put tildes on these functions to distinguish them from the $f$s defined in \eqref{eq:encoding}. The $\tilde{f}$s that we define here are functions we use as pieces of the overall encoding functions $f$. Each one is constructed by a uniform random binning procedure. For a given $i$ and $c$, one can think of $\{\tilde{f}_{i,c,j}\}_j$ as a subcodebook that associates each $x_i^n\in\Xcal_i^n$ with a long sequence of bits split into blocks of length $n(\eps+\nu)$ or $n\eps$. Define composite functions
\[\tilde{F}_{i,c,j}(x_i^n)=(\tilde{f}_{i,c,1}(x_i^n),\cdots,\tilde{f}_{i,c,j}(x_i^n)).\]
We can think of $\tilde{F}_{i,c,j}(x_i^n)$ as an index of one of $2^{n(j\eps+\nu)}$ random bins.

\emph{2) Round Method:} We propose a coding scheme made up of $N$ rounds, with each round composed of $m$ phases. In the $i$th phase, transactions are made entirely with sensor $i$. We denote $x_i^n(I)$ as the $I$th block of $n$ source values, but for convenience, we will not include the index $I$ when it is clear from context. As in the three-sensor example, all transactions in the $I$th round are based only on $X_\Mcal^n(I)$. Thus the total block length is $Nn$.

The procedure for each round is identical except for the variable $\Vmsc(I)$ maintained by the decoder. This represents the collection of sets that could be the set of honest sensors based on information the decoder has received as of the beginning of round $I$. The decoder begins by setting $V(1)=\Hmsc$ and then pares it down at the end of each round based on new information.

\emph{3) Encoding and Decoding Rules:} In the $i$th phase, if $i\in \Ucal(\Vmsc(I))$, the decoder makes a number of transactions with sensor $i$ and produces an estimate $\hat{X}_i^n$ of $X_i^n$. The estimate $\hat{X}_i^n$ is of course a random variable, so as usual the lower case $\hat{x}_i^n$ refers to a realization of this variable. If $i\not\in \Ucal(\Vmsc(I))$, then the decoder has determined that sensor $i$ cannot be honest, so it does not communicate with it and sets $\hat{x}_i^n$ to a null value.

For $i\in \Ucal(\Vmsc(I))$, at the beginning of phase $i$, sensor $i$ randomly selects a $c\in\{1,\cdots,C\}$. In the first transaction, sensor $i$ transmits $(c,\tilde{f}_{i,c,1}(X_i^n))$. As the phase continues, in the $j$th transaction, sensor $i$ transmits $\tilde{f}_{i,c,j}(X_i^n)$.

After each transaction, the decoder decides whether to ask for another transaction based on the following rubric. For any $s\subset\Mcal$ and $\hat{x}_s^n\in\Xcal_s^n$, let
\[T_j(\hat{x}_s^n)\triangleq\{x_i^n:H_{t(\hat{x}_s^nx_i^n)}(X_i|X_s)\le j\eps\}.\]
Note that
\[|T_j(\hat{x}_s^n)|\le(n+1)^{|\Xcal_i\times\Xcal_s|}2^{nj\eps}.\]
Let $s_i\triangleq\{1,\cdots,i\}\cap \Ucal(\Vmsc)$ and $\hat{x}_{s_{i-1}}^n$ be the previously decoded source sequences in this round. After $j$ transactions, the decoder will choose to do another transaction if there are no sequences in $T_j(\hat{x}_{s_{i-1}})$ matching the received value of $\tilde{F}_{i,c,j}$. If there is at least one such sequence, let $\hat{x}_i^n$ be one such sequence. If there are several, the decoder chooses from among them arbitrarily.

\emph{4) Round Conclusion:} At the end of round $I$, the decoder produces $\Vmsc(I+1)$ by setting 
\beq \Vmsc(I+1)=\bigg\{\Scal\in \Vmsc(I):t(\hat{x}_{\Ucal(\Vmsc(I))}^n)\in \bigcup_{r'\in R(\Scal)}\breve{\Qcal}^\eta_{\Scal,r'}\bigg\}\label{eq:vupdate}\eeq
for $\eta$ to be defined such that $\eta\ge\eps$ and $\eta\to 0$ as $\eps\to 0$.

\subsection{Error Probability}\label{subsection:errorprob}

Define the following error events:
\begin{align*}
\Ecal_1(I,i)&\triangleq\{\hat{X}_i^n(I)\ne X_i^n(I)\},\\
\Ecal_2(I)&\triangleq\{\Hcal\not\in \Vmsc(I)\},\\
\Ecal_3(I)&\triangleq\{t(\hat{X}_{\Ucal(\Vmsc)}^n(I))\not\in \breve{\Qcal}^\eta_{\Hcal,r}\}\backslash\bigcup_{i\in \Hcal}\Ecal_1(I,i).
\end{align*}
The total probability of error is
\[P_e=\Pr\left(\bigcup_{I=1}^n\bigcup_{i\in \Hcal}\Ecal_1(I,i)\right).\]
For any sequence of events $\Acal_0,\Acal_1,\cdots,\Acal_N$ with $\Acal_I\subset \Acal_{I+1}$ and $\Pr(\Acal_0)=0$,
\begin{align*}\Pr(\Acal_N)
%&=1-\Pr(\Acal_N^c)\\
&=1-\prod_{I=1}^{N}\frac{\Pr(\Acal_{I}^c)}{\Pr(\Acal_{I-1}^c)}
%&=1-\prod_{I=1}^{N}\Pr(\Acal_{I}^c|\Acal_{I-1}^c)\\
=1-\prod_{I=1}^{N}(1-\Pr(\Acal_{I}|\Acal_{I-1}^c))\\
&\le\sum_{I=1}^{N}\Pr(\Acal_{I}|\Acal_{I-1}^c).\end{align*}
Set $\Acal_I=\Ecal_2(I+1)\cup\bigcup_{i\in \Hcal}\Ecal_1(1,i)\cup\cdots\cup\Ecal_1(I,i)$. This satisfies the conditions, and since $\hat{X}_\Mcal^n(I-1)\to\Vmsc(I)\to\hat{X}_\Mcal^n(I)$ is a Markov chain,
\[\Pr(\Acal_I|\Acal_{I-1}^c)=\Pr\Big(\Ecal_2(I+1)\cup\bigcup_{i\in \Hcal}\Ecal_1(I,i)\Big|\Ecal_2^c(I)\Big).\]
Therefore
\[P_e\le\sum_{I=1}^N\Pr\Big(\Ecal_2(I+1)\cup\bigcup_{i\in \Hcal}\Ecal_1(I,i)\Big|\Ecal_2^c(I)\Big).\]
If $\Hcal\in \Vmsc(I)$ and $t(\hat{X}_{\Ucal(\Vmsc)}^n(I))\in \breve{\Qcal}^\eta_{\Hcal,r}$, then $\Hcal\in \Vmsc(I+1)$. Thus
\begin{align*}\Ecal_2(I+1)\backslash\Ecal_2^c(I)&\subset\{t(\hat{X}_{\Ucal(\Vmsc)}^n(I))\not\in \breve{\Qcal}^\eta_{\Hcal,r}\}\backslash\Ecal_2^c(I)\\
&\subset\Big(\Ecal_3(I)\cup\bigcup_{i\in \Hcal}\Ecal_1(I,i)\Big)\backslash\Ecal_2^c(I)\end{align*}
so
\begin{align}P_e&\le\sum_{I=1}^N\Pr\Big(\Ecal_3(I)\cup\bigcup_{i\in \Hcal}\Ecal_1(I,i)\Big|\Ecal_2^c(I)\Big)\nonumber\\
&\le\sum_{I=1}^N\Pr(\Ecal_3(I)|\Ecal_2^c(I))+\sum_{I=1}^N\sum_{i\in \Hcal}\Pr(\Ecal_1(I,i)|\Ecal_2^c(I)).\label{eq:pe1}\end{align}
We will show that for any $I$,
\beq\Pr(\Ecal_3(I)|\Ecal_2^c(I))\le\frac{\alpha}{2N}.\label{eq:event3}\eeq
If the traitors receive perfect source information, then 
\begin{align*}\Ecal_3(I)&\subset\{\hat{X}_\Hcal^n(I)\not\in T_\eps^n(X_\Hcal)\}\cap\{\hat{X}_i^n(I)= X_i^n(I),\forall i\in \Hcal\}\\
&\subset\{X_\Hcal^n(I)\not\in T_\eps^n(X_\Hcal)\}\end{align*}
meaning \eqref{eq:event3} holds for sufficiently large $n$. Thus \eqref{eq:event3} is only nontrivial if the traitors receive imperfect source information. This case is dealt with in Section~\ref{subsection:imperfect}.

Now consider $\Pr(\Ecal_1(I,i)|\Ecal_2^c(I))$ for honest $i$. Conditioning on $\Ecal_2^c(I)$ ensures that $i\in \Ucal(\Vmsc(I))$ for honest $i$, so $\hat{X}_i^n(I)$ will be non-null. The only remaining way to make an error on $X_i^n$ is if there is some transaction $j$ for which there is a sequence $x_i'^n\in T_j(\hat{X}_{s_{i-1}}^n)$ such that $x_i'^n\ne X_i'^n$ and $\tilde{F}_{i,c,j}$ has the same value for $X_i^n$ and $x_i'^n$. However, $s_{i-1}$ may contain traitors. Indeed, it may be made entirely of traitors. Thus, we have to take into account that $\hat{X}_{s_{i-1}}^n$ may be chosen to ensure the existence of such an erroneous $x_i'^n$.

Let 
\begin{multline*}k_1(x_i^n,\hat{x}_{s_{i-1}}^n)\triangleq|\{c:\exists j,x_i'^n\in T_j(\hat{x}_{s_{i-1}}^n)\backslash\{x_i^n\}:\\F_{i,c,j}(x_i'^n)=F_{i,c,j}(x_i^n)\}|.\end{multline*}
That is, $k_1$ is the number of subcodebooks that if chosen could cause an error. Recall that sensor $i$ chooses the subcodebook randomly from the uniform distribution. Thus, given $x_i^n$ and $\hat{x}_{s_{i-1}}^n$, the probability of an error resulting from a bad choice of subcodebook is $k_1(x_i^n,\hat{x}_{s_{i-1}}^n)/C$. Furthermore, $k_1$ is based strictly on the codebook, we can think of $k_1$ as a random variable based on the codebook choice. Averaging over all possible codebooks,
\[\Pr(\Ecal_1(I,i)|\Ecal_2^c(I))\le \Embb\!\!\sum_{x_i^n\in\Xcal_i^n}p(x_i^n)\!\max_{\hat{x}_{s_{i-1}}^n\in\Xcal_{s_{i-1}}^n}\!\frac{k_1(x_i^n,\hat{x}_{s_{i-1}}^n)}{C}\]
where the expectation is taken over all codebooks.

Let $\Ccal$ be the set of all codebooks. We define a subset $\Ccal_1$, and show that the probability of error can be easily bounded for any codebook in $\Ccal\backslash\Ccal_1$, and the probability of a codebook being chosen in $\Ccal_1$ is small. In particular, let $\Ccal_1$ be the set of codebooks for which, for any $x_i^n\in\Xcal_i^n$ and $\hat{x}_{s_{i-1}}^n\in\Xcal_{s_{i-1}}^n$, $k_1(x_i^n,\hat{x}_{s_{i-1}}^n)> B$, for an integer $B\le C$ to be defined later. Then
\begin{align}
\Pr(\Ecal_1(I,i)|\Ecal_2^c(I))&\le\Pr(\Ccal\backslash\Ccal_1)\sum_{x_i^n\in\Xcal_i^n}p(x_i^n)\max_{\hat{x}_{s_{i-1}}^n\in\Xcal_{s_{i-1}}^n}\frac{B}{C}\nonumber\\
&+\Pr(\Ccal_1)\sum_{x_i^n\in\Xcal_i^n}p(x_i^n)\max_{\hat{x}_{s_{i-1}}^n\in\Xcal_{s_{i-1}}^n}\frac{C}{C}\nonumber\\
&\le \frac{B}{C}+\Pr(\Ccal_1).\label{eq:pe2}
\end{align}

Since each subcodebook is generated identically, $k_1$ is a binomial random variable with $C$ trials and probability of success
\begin{align*}
P&\triangleq\Pr\!\big(\exists j,x'^n_i\in T_j(\hat{x}_{s_{i-1}}^n)\backslash\{x_i^n\}:F_{i,c,j}(x'^n_i)=F_{i,c,j}(x_i^n)\big)\\
&\le \sum_j\sum_{x'^n_i\in T_j(\hat{x}_{s_{i-1}}^n)\backslash\{x_i^n\}}\Pr\big(F_{i,c,j}(x'^n_i)=F_{i,c,j}(x_i^n)\big)\\
&\le J_i\left|T_j(\hat{x}_{s_{i-1}}^n)\right|2^{-n(j\eps+\nu)}\\
&\le J_i (n+1)^{|\Xcal_i\times\Xcal_{s_{i-1}}|}2^{-n\nu}\le 2^{n(\eps-\nu)}
\end{align*}
for sufficiently large $n$. For a binomial random variable $X$ with mean $\bar{X}$ and any $\kappa$, we can use the Chernoff bound to write
\beq\Pr(X\ge \kappa)\le \left(\frac{e\bar{X}}{\kappa}\right)^{\kappa}.\label{eq:binomialbound}\eeq
Therefore
\[\Pr(k_1(x_i^n,\hat{x}_{s_{i-1}}^n)>B)\le\left(\frac{eCP}{B+1}\right)^{B+1}\le 2^{nB(\eps-\nu)}\]
if $\nu>\eps$ and $n$ is sufficiently large. Thus
\begin{align}
\Pr(\Ccal_1)&=\Pr(\exists x_i^n,\hat{x}_{s_{i-1}}^n:k_1(x_i^n,\hat{x}_{s_{i-1}}^n)>B)\nonumber\\
&\le\sum_{x_i^n}\sum_{\hat{x}_{s_{i-1}}^n}\Pr(k(x_i^n,\hat{x}_{s_{i-1}}^n)>B)\nonumber\\
&\le \sum_{x_i^n}\sum_{\hat{x}_{s_{i-1}}^n}2^{nB(\eps-\nu)}\nonumber\\
&=2^{n(\log|\Xcal_i|+\log|\Xcal_{s_{i-1}}|+B(\eps-\nu)}.\label{eq:pe3}
\end{align}
Combining \eqref{eq:pe1} with \eqref{eq:event3}, \eqref{eq:pe2}, and \eqref{eq:pe3} gives
\begin{align*}
P_e&\le \frac{\alpha}{2}+\sum_{I=1}^N\sum_{i\in \Hcal}\left(\frac{B}{C}+2^{n(\log|\Xcal_i|+\log|\Xcal_{s_{i-1}}|+B(\eps-\nu)}\right)\\
&\le\frac{\alpha}{2}+Nm\left(\frac{B}{C}+2^{n(\log|\Xcal_\Mcal|+B(\eps-\nu))}\right)\end{align*}
which is less than $\alpha$ for sufficiently large $n$ if
\[B>\frac{\log|\Xcal_\Mcal|}{\nu-\eps}\]
and
\[C\ge\frac{3NmB}{\alpha}>\frac{3Nm\log|\Xcal_\Mcal|}{\alpha(\nu-\eps)}.\]

\subsection{Code Rate}\label{subsection:coderate}

The discussion above placed a lower bound on $C$. However, for sufficiently large $n$, we can make 
$\frac{1}{n}\log C\le\eps,$ meaning it takes no more than $\eps$ rate to transmit the subcodebook index $c$. Therefore the rate for phase $i$ is at most $(j+1)\eps+\nu$, where $j$ is the number of transactions in phase $i$. Transaction $j$ must be the earliest one with $\hat{x}_i^n\in T_j(\hat{x}_{s_{i-1}})$, otherwise it would have been decoded earlier. Thus $j$ is the smallest integer for which
\[H_{t(\hat{x}_{s_{i-1}}^n\hat{x}_i^n)}(X_i|X_{s_{i-1}})\le j\eps\]
meaning
\[j\eps\le H_{t(\hat{x}_{s_{i-1}}^n\hat{x}_i^n)}(X_i|X_{s_{i-1}})+\eps.\]
By \eqref{eq:vupdate}, for all $s\in \Vmsc(I+1)$, $t(\hat{x}_{\Ucal(\Vmsc(I))}^n)\in \bigcup_{r'\in R(s)}\breve{\Qcal}^\eta_{s,r'}$, meaning 
\beq t(\hat{x}_{\Ucal(\Vmsc(I))})\in\bigcap_{s\in \Vmsc(I+1)}\bigcup_{r'\in R(s)}\breve{\Qcal}^\eta_{s,r'}=\breve{\Qcal}^\eta(\Vmsc(I+1)).\label{eq:typeproperty}\eeq
Combining this with \eqref{eq:event3}, with probability at least $1-\alpha$, $t(\hat{x}_{\Ucal(\Vmsc(I))})\in \breve{\Qcal}^\eta_{\Hcal,r}\cap \breve{\Qcal}^\eta(\Vmsc(I+1))$. Therefore with high probability the rate for all of round $I$ is at most
\begin{align}
&\sum_{i\in \Ucal(\Vmsc(I))} \left(H_{t(\hat{x}_{s_{i-1}}^n\hat{x}_i^n)}(X_i|X_{s_{i-1}})+2\eps+\nu\right)\nonumber\\
&\le H_{t(\hat{x}_{\Ucal(\Vmsc(I))})}\left(X_{\Ucal(\Vmsc)}\right)+m(2\eps+\nu)\nonumber\\
&\le \sup_{q\in \breve{\Qcal}^\eta_{\Hcal,r}\cap \breve{\Qcal}^\eta(\Vmsc(I+1))}H_q\left(X_{\Ucal(\Vmsc)}\right)+m(2\eps+\nu)\nonumber\\
&\le \sup_{q\in \breve{\Qcal}^\eta_{\Hcal,r}\cap \breve{\Qcal}^\eta(\Vmsc(I+1))}H_q\left(X_{\Ucal(\Vmsc(I+1))}\right)\nonumber\\
&\qquad\qquad+\log \left|\Xcal_{\Ucal(\Vmsc(I))\backslash \Ucal(\Vmsc(I+1))}\right|+m(2\eps+\nu)\nonumber\\
&\le \sup_{\Vmsc\subset\Hmsc,\ q\in \breve{\Qcal}^\eta_{\Hcal,r}\cap \breve{\Qcal}^\eta(\Vmsc)}H_q(X_{\Ucal(\Vmsc)})\nonumber\\
&\qquad\qquad+\log \left|\Xcal_{\Ucal(\Vmsc(I))\backslash \Ucal(\Vmsc(I+1))}\right|+m(2\eps+\nu).\label{eq:rate4}
\end{align}
Whenever $\Ucal(\Vmsc(I))\backslash \Ucal(\Vmsc(I+1))\ne\emptyset$, at least one sensor is eliminated. Therefore the second term in \eqref{eq:rate4} will be nonzero in all but at most $m$ rounds. Moreover, although we have needed to bound $\nu$ from below, we can still choose it such that $\nu\to 0$ as $\eps\to 0$. Thus if $N$ is large enough, the rate averaged over all rounds is no more than 
\[R_\eps(\Hcal,r)\triangleq\sup_{\Vmsc\subset\Hmsc,\ q\in \breve{\Qcal}^\eta_{\Hcal,r}\cap \breve{\Qcal}^\eta(\Vmsc)}H_q(X_{\Ucal(\Vmsc)})+\dot{\eps}\]
where $\dot{\eps}\to 0$ as $\eps\to 0$. This is a precisely $\alpha$-achievable rate function. By continuity of entropy,
\[\lim_{\eps\to 0}R_\eps(\Hcal,r)=\sup_{\Vmsc\subset\Hmsc,\ q\in \Qcal_{\Hcal,r}\cap \Qcal(\Vmsc)}H_q(X_{\Ucal(\Vmsc)})=R^*(\Hcal,r)\]
so $R^*(\Hcal,r)$ is achievable.

\subsection{Imperfect Traitor Information}\label{subsection:imperfect}

We now consider the case that the traitors have access to imperfect information about the sources. The additional required piece of analysis is to prove \eqref{eq:event3}. That is
\[\Pr(t(\hat{X}_{\Ucal(\Vmsc(I))}^n(I))\not\in\breve{\Qcal}^\eta_{\Hcal,r},\hat{x}_\Hcal=X_\Hcal|\Hcal\in \Vmsc(I))\le \frac{\alpha}{2N}.\]
We will in fact prove the slightly stronger statement
\begin{multline}\Pr(t(X_{\Hcal\cap \Ucal(\Vmsc(I))}^n(I)\hat{X}_{\Tcal\cap \Ucal(\Vmsc(I))}^n(I))\not\in\breve{\Qcal}^\eta_{\Hcal,r}|\Hcal\in \Vmsc(I))\\
\le \frac{\alpha}{2N}.\label{eq:toprove}\end{multline}
Since we condition on $\Hcal\in \Vmsc(I)$, we can assume $\Hcal\subset \Ucal(\Vmsc(I))$. For notational convenience, let $Y=X_\Hcal(I)$ and $Z=X_{\Tcal\cap \Ucal(\Vmsc(I))}(I)$, so \eqref{eq:toprove} becomes
\[\Pr(t(Y^n\hat{Z}^n)\not\in\breve{\Qcal}^\eta_{\Hcal,r}|\Hcal\in \Vmsc(I))\le \frac{\alpha}{2N}.\]

\renewcommand{\j}{\mathbf{j}}

Based on their received value of $W^n$, the traitors choose a value of $c$ and then a series of messages for each traitor in $\Ucal(\Vmsc(I))$. The number of messages each traitor actually gets to send depends on how long it takes for the decoder to construct a source estimate. Let $\j=\{j_i\}_{i\in T\cap \Ucal(\Vmsc(I))}$ be a vector representing the number of transactions that take place with each traitor in $\Ucal(\Vmsc(I))$. There are $J_\Tcal\triangleq\prod_{i\in\Tcal\cap \Ucal(\Vmsc(I))}J_i$ different possible values of $\j$. We can think of any series of values of $c$ and messages as a bin (i.e. a subset $\Zcal^n$); that is, all sequences that map to the same messages in the subcodebooks denoted by the values of $c$. Let $R(\j)$ be the rate at which the traitors transmit given $\j$. Thus if we let $\Bcal_R$ be the set of all bins in the codebook constructed at rate $R$, the traitors are equivalent to a group of potentially random functions $g_{\j}:\Wcal^n\to \Bcal_{R(\j)}$.

Consider a joint $y,z$ type $t$. In order for $(Y^n\hat{z}^n)$ to have type $t$ for a given $\j$, we need $R(\j)\ge H_t(Z|Y)+\nu$. Thus
\begin{multline*}\Pr((Y^n\hat{z}^n)\in\Lambda_t^n(YZ))\le\Pr(\exists \j:R(\j)\ge H_t(Z|Y)+\nu,\\z^n\in g_{\j}(W^n)\cap\Lambda_t^n(Z|Y^n)).\end{multline*}
%, there must be some $\vec{j}$ such that $g_{\vec{j}}(W^n)$ is one of at least $2^{n(H_t(Z|Y)+\nu)}$ bins, and there is some $z^n\in %g_{\vec{j}}(W^n)\cap \Lambda_t^n(Z|Y^n)$. For all $\vec{j}$ satisfying this condition, the bins $g_{\vec{j}}(W^n)$ are identical in the %sense that the codebooks are constructed such that any $z^n$ sequence has probability no more than $2^{-n(H_t(Z|Y)+\nu)}$ of falling %into the bin. Thus we do not lose generality if we drop the $\vec{j}$ subscript, and we have
%\[\Pr((Y^n\hat{z}^n)\in\Lambda_t^n(YZ))\le\Pr(\exists z^n\in g(W^n)\cap\Lambda_t^n(Z|Y^n)).\]
Let $\delta\triangleq\frac{\eps}{4N}$,
\begin{multline*}\delta_{t,\j} \triangleq \Pr((Y^n,W^n)\in T_\eps^n(YW),\\
\exists z^n\in g_{\j}(W^n)\cap\Lambda_t^n(Z|Y^n))\end{multline*}
and
\[\Pcal\triangleq\left\{t:\max_{\j:R(\j)\ge H_t(Z|Y)+\nu}\delta_{t,\j}\ge \frac{\delta}{(n+1)^{|\Ycal\times\Zcal|}J_T}\right\}.\]
We will show that $\Pcal\subset \breve{\Qcal}^\eta_{\Hcal,r}$, so that
\begin{align*}&\Pr(t(Y^n\hat{z}^n)\not\in\breve{\Qcal}_{\Hcal,r}|\Hcal\in \Vmsc(I))\\
&\le\Pr(t(Y^n\hat{z}^n)\not\in \Pcal|\Hcal\in \Vmsc(I))\\
&\le\Pr\big(\exists t\in \Pcal^c,\j:R(\j)\ge H_t(Z|Y)+\nu,\\
&\qquad\qquad\qquad z^n\in g_{\j}(W^n)\cap\Lambda_t^n(Z|Y^n)\big|\Hcal\in \Vmsc(I)\big)\\
&\le\Pr((Y^n,W^n)\not\in T_\eps^n(YW))+\sum_{t\in \Pcal^c}\sum_{\j:R(\j)\ge H_t(Z|Y)+\nu}\delta_{t,\j}\\
&\le \delta+(n+1)^{|\Ycal\times\Zcal|}J_T\frac{\delta}{(n+1)^{|\Ycal\times\Zcal|}J_T}
= 2\delta=\frac{\alpha}{2N}\end{align*}
for sufficiently large $n$.

Fix $t\in \Pcal$. There is some $\j$ with $R(\j)\ge H_t(Z|Y)+\nu$ and $\delta_{t,\j}\ge\frac{\delta}{(n+1)^{|\Ycal\times\Zcal|}J_T}$. Any random $g_{\j}$ is a probabilistic combination of a number of deterministic functions, so if this lower bound on $\delta_{t,\j}$ holds for a random $g_{\j}$, it must also hold for some deterministic $g_{\j}$. Therefore we do not lose generality to assume from now on that $g_{\j}$ is deterministic. We also drop the $\j$ subscript for convenience. Our method of proof will be to demonstrate that such a functions $g$ can only exist if there is also a $h:\Wcal^n\to\Zcal^n$ with almost the same properties. That is, if the traitors can fabricate a counterfeit bin made up of source sequences, they can fabricate a single counterfeit source sequence contained in this bin that works nearly as well.

Define the following sets:
\begin{multline*}
A_\eps^n(Y|w^n)\triangleq\{y^n\in T_\eps^n(Y|w^n):\\
\shoveright{\exists z^n\in g(w^n)\cap \Lambda_t^n(Z|y^n)\},}\\
\shoveleft{A_\eps^n(W)\triangleq\Big\{w^n\in T_\eps^n(W):}\\
\Pr(Y^n\in A_\eps^n(Y|w^n)|W^n=w^n)\ge \frac{\delta}{2(n+1)^{|\Ycal\times\Zcal|}J_T}\Big\}.\end{multline*}
Applying the definitions of $\Pcal$ and $\delta_{t,\j}$ gives
\begin{align*}
&\frac{\delta}{(n+1)^{|\Ycal\times\Zcal|}J_T}\\
&\le\Pr((Y^nW^n)\in T_\eps^n(YW):\exists z^n\in g(W^n)\cap \Lambda_t^n(Z|Y^n))\\
&=\sum_{w^n\in T_\eps^n(W)}p(w^n)\Pr(Y^n\in A_\eps^n(Y|w^n)|W^n=w^n)\\
&\le \Pr(W^n\in A_\eps^n(W))+\frac{\delta}{2(n+1)^{|\Ycal\times\Zcal|}J_T}
\end{align*}
meaning $\Pr(W^n\in A_\eps^n(W))\ge \frac{\delta}{2(n+1)^{|\Ycal\times\Zcal|}J_T}.$ Fix $w^n\in A_\eps^n(W)$. Since $A_\eps^n(Y|w^n)\subset T_\eps^n(Y|w^n)$,
\[|A_\eps^n(Y|w^n)|\ge \frac{\delta}{2(n+1)^{|\Ycal\times\Zcal|}J_T}2^{n(H(Y|W)-\eps)}.\]
Note also that
\begin{align*}|A_\eps^n(Y|w^n)|&\le\sum_{y^n\in T_\eps^n(Y|w^n)}|g(w^n)\cap \Lambda_t^n(Z|y^n)|\\
&=\sum_{z^n\in g(w^n)}|\Lambda_t^n(Y|z^n)\cap T_\eps^n(Y|w^n)|.\end{align*}
Setting $k_2(z^n,w^n)\triangleq|\Lambda_t^n(Y|z^n)\cap T_\eps^n(Y|w^n)|$, 
\begin{align}\sum_{z^n\in g(w^n)}k_2(z^n,w^n)&\ge \frac{\delta}{2(n+1)^{|\Ycal\times\Zcal|}J_T}2^{n(H(Y|W)-\eps)}\nonumber\\
&\ge 2^{n(H(Y|W)-2\eps)}\label{eq:k2binsum}\end{align}
for sufficiently large $n$. We will show that there is actually a single $\tilde{z}^n\in g(w^n)$ such that $k_2(\tilde{z}^n,w^n)$ represents a large portion of the above sum, so $\tilde{z}^n$ itself is almost as good as the entire bin. Then setting $h(w^n)=\tilde{z}^n$ will give us the properties we need. Note that
\begin{align}
\sum_{z^n\in \Zcal^n}k_2(z^n,w^n)\nonumber
&=\sum_{y^n\in T_\eps^n(Y|w^n)}|\Lambda_t^n(Z|y^n)|\nonumber\\
&\le 2^{n(H(Y|W)+H_t(Z|Y)+\eps)}.\label{eq:k2fullsum}\end{align}
Certainly 
\[k_2(z^n,w^n)\le|T_\eps^n(Y|w^n)|\le 2^{n(H(Y|W)+\eps)}\]
so if we let $l(z^n)$ be the integer such that
\begin{multline}
2^{n(H(Y|W)-l(z^n)\eps)}< k_2(z^n,w^n)\\\le 2^{n(H(Y|W)-(l(z^n)-1)\eps)}.\label{eq:ldef}\end{multline}
then $l(z^n)\ge 0$. Furthermore, if $k_2(z^n,w^n)>0$, then $l(z^n)\le L\triangleq\lceil\frac{H(Y|W)}{\eps}\rceil.$ Let $M(l)=|\{z^n\in \Zcal^n:l(z^n)=l\}|$. Then from \eqref{eq:k2fullsum}, for some $l$,
\begin{align*}
2^{n(H(Y|W)+H_t(Z|Y)+\eps)}&\ge\sum_{z^n\in \Zcal^n}k_2(z^n,w^n)\\
&\ge \sum_{z^n\in \Zcal^n:l(z^n)=l}k_2(z^n,w^n)\\
&\ge M(l)2^{n(H(Y|W)-l\eps)}
\end{align*}
giving
\[M(l)\le 2^{n(H_t(Z|Y)+(l+1)\eps)}.\]
For any bin $b\in\Bcal_{R(\j)}$, let $\tilde{M}(l,b)\triangleq|\{z^n\in b:l(z^n)=l\}|$. Since $R(\j)\ge H_t(Z|Y)+\nu$, $\tilde{M}(l,b)$ is a binomial random variable with $M(l)$ trials and probability of success at most $2^{-n(H_t(Z|Y)+\nu)}$. Thus
\begin{align*}\Embb\tilde{M}(l,b)&\le 2^{n(H_t(Z|Y)+(l+1)\eps)}2^{-n(H_t(Z|Y)+\nu)}\\&=2^{n((l+1)\eps-\nu)}.\end{align*}
Let $\Ccal_2$ be the set of codebooks such that for any group of sensors, subcodebooks, type $t$, transactions $\j$, sequence $w^n\in\Wcal^n$, bin $b$ and integer $l$, either $\tilde{M}(l,b)\ge 2^{n\eps}$ if $(l+1)\eps-\nu\le 0$ or $\tilde{M}(l,b)\ge 2^{n((l+2)\eps-\nu)}$ if $(l+1)\eps-\nu>0$. We will show that the probability of $\Ccal_2$ is small, so we may disregard it. Again using \eqref{eq:binomialbound}, if $(l+1)\eps-\nu\le 0$,
\[\Pr(\tilde{M}(l,b)\ge 2^{n\eps})
\le \left(\frac{e}{2^{n(-l\eps+\nu)}}\right)^{2^{n\eps}}\le 2^{-2^{n\eps}}\]
and if $(l+1)\eps-\nu>0$,
\begin{align*}\Pr(\tilde{M}(l,b)\ge 2^{n((l+2)\eps-\nu)})
&\le \left(\frac{e}{2^{n\eps}}\right)^{2^{n((l+2)\eps-\nu)}}\\
&\le 2^{-2^{n((l+2)\eps-\nu)}}\end{align*}
both for sufficiently large $n$.  Therefore
\begin{multline*}\Pr(\Ccal_2)\le 2^m C^m (n+1)^{|\Xcal_\Mcal|} J_1\cdots J_m |\Wcal|^n 2^{n(|\Xcal_\Mcal|+\nu)}\\ 
\cdot\left(\sum_{0\le l\le\frac{\nu}{\eps}-1}2^{-2^{n\eps}}+\sum_{\frac{\nu}{\eps}-1<l\le L}2^{-2^{n((l+2)\eps-\nu)}}\right)\end{multline*}
which vanishes as $n$ grows.

We assume from now on that the codebook is not in $\Ccal_2$, meaning in particular that $\tilde{M}(l,g(w^n))\le 2^{n\eps}$ for $(l+1)\eps-\nu\le 0$ and $\tilde{M}(l,g(w^n))\le 2^{n((l+2)\eps-\nu)}$ for $(l+1)\eps-\nu>0$. Applying these and \eqref{eq:ldef} to \eqref{eq:k2binsum} and letting $\tilde{l}$ be an integer defined later,
\begin{align*}
2^{-n2\eps}&\le 2^{-nH(Y|W)}\sum_{z^n\in g(w^n)}k_2(z^n,w^n)\\
&\le \sum_{l=0}^L \tilde{M}(l,g(w^n))2^{-n(l-1)\eps}\\
&=\sum_{0\le l<\tilde{l}}\tilde{M}(l,g(w^n))2^{-n(l-1)\eps}\\
&\qquad+\sum_{\tilde{l}\le l\le\frac{\nu}{\eps}-1}\tilde{M}(l,g(w^n))2^{-n(l-1)\eps}\\
&\qquad+\sum_{\frac{\nu}{\eps}-1<l\le L}\tilde{M}(l,g(w^n))2^{-n(l-1)\eps}\\
&\le\sum_{0\le l<\tilde{l}}\tilde{M}(l,g(w^n))2^{n\eps}+\sum_{\tilde{l}\le l\le\frac{\nu}{\eps}-1}2^{n\eps}2^{-n(\tilde{l}-1)\eps}\\
&\qquad+\sum_{\frac{\nu}{\eps}-1<l\le L}2^{n((l+2)\eps-\nu)}2^{-n(l-1)\eps}\\
&\le\sum_{0\le l<\tilde{l}}\tilde{M}(l,g(w^n))2^{n\eps}+L2^{n(-\tilde{l}+2)\eps}+L2^{n(3\eps-\nu)}.
\end{align*}
Therefore
\[\sum_{0\le l<\tilde{l}}\tilde{M}(l,g(w^n))
\ge 2^{-n3\eps}\left(1-L2^{n(-\tilde{l}+4)\eps}-L2^{n(5\eps-\nu)}\right).\]
Setting $\tilde{l}=5$ and $\nu>5\eps$ ensures that the right hand side is positive for sufficiently large $n$, so there is at least one $z^n\in g(w^n)$ with $|T_\eps^n(Y|w^n)\cap \Lambda_t^n(Y|z^n)|\ge 2^{n(H(Y|W)-4\eps)}$. Now we define $h:\Wcal^n\to\Zcal^n$ such that $h(w^n)$ is such a $z^n$ for $w^n\in A_\eps^n(W)$ and $h(w^n)$ is arbitrary for $w^n\not\in A_\eps^n(W)$. If we let $\tilde{Z}^n=h(W^n)$,
\begin{align*}
\Pr((&Y^n\tilde{Z}^n)\in\Lambda_t^n(YZ))\\
&\ge \sum_{w^n\in A_\eps^n(W)}p(w^n)\Pr(Y^n\in \Lambda_t^n(Y|h(w^n))|W^n=w^n)\\
&\ge\sum_{w^n\in A_\eps^n(W)}p(w^n)\\
&\qquad\cdot\Pr(Y^n\in T_\eps^n(Y|w^n)\cap \Lambda_t^n(Y|h(w^n))|W^n=w^n)\\
&\ge\Pr(W^n\in A_\eps^n(W))2^{-n(H(Y|W)+\eps)}2^{n(H(Y|W)-4\eps)}\\
&\ge \frac{\delta}{2(n+1)^{|\Ycal\times\Zcal|}} 2^{-n5\eps}.\end{align*}
The variables $(Y^nW^n\tilde{Z}^n)$ are distributed according to
\[q^n(y^nw^nz^n)=\left(\prod_{i=1}^n p(y_i)r(w_i|y_i)\right)\mathbf{1}\{z^n=h(w^n)\}.\]
Let $q_i(ywz)$ be the marginal distribution of $q^n(y^nw^nz^n)$ at time $i$. It factors as
\[q_i(ywz)=p(y)r(w|y)q_i(z|w).\]
Let $\bar{q}(yz)\triangleq\frac{1}{n}\sum_i q_i(yz)$ and $\bar{q}(z|w)\triangleq\frac{1}{n}\sum_{i=1}^nq_i(z|w)$. Then
\[\bar{q}(yz)=p(y)\sum_{w}r(w|y)\bar{q}(z|w)\]
so by Lemma~\ref{lemma:kl},
\begin{multline*}D\left(t\Big\|p(y)\sum_{w}r(w|y)\bar{q}(z|w)\right)\\
\le-\frac{1}{n}\log\left(\frac{\delta}{2(n+1)^{|\Ycal\times\Zcal|}}\right)+5\eps.\end{multline*}
Therefore $t\in\breve{\Qcal}^\eta_{\Hcal,r}$ for sufficiently large $n$ and some $\eta$ such that $\eta\to 0$ as $\eps\to 0$.

\subsection{Eavesdropping Traitors}\label{subsection:eavesdropping}

We consider now the case that the traitors are able to overhear communication between the honest sensors and the decoder. If the traitors have perfect information, then hearing the messages sent by honest sensors will not give them any additional information, so the above coding scheme still works identically. If the traitors have imperfect information, we need to slightly modify the coding scheme, but the achievable rates are the same.

The important observation is that eavesdropping traitors only have access to messages sent in the past. Thus, by permuting the order in which sensors are polled in each round, the effect of the eavesdropping can be eliminated. In a given round, let $\Hcal'$ be the set of honest sensors that transmit before any traitor. Since the additional information gain from eavesdropping will be no more than the values of $X_{\Hcal'}^n$, the rate for this round, if no sensors are eliminated (i.e. $\Ucal(\Vmsc(I+1))=\Ucal(\Vmsc(I))$), will be no more than the rate without eavesdropping when the traitors have access to $W'^n=(W^n,X_{\Hcal'}^n)$. The goal of permuting the transmission order is to find an ordering in which all the traitors transmit before any of the honest sensors, since then the achieved rate, if no sensors are eliminated, will be the same as with no eavesdropping. It is possible to determine when such an order occurs because it will be the order that produces the smallest rate.

More specifically, we will alter the transmission order from round to round in the following way. We always choose an ordering such that for some $\Scal\in\Vmsc$, the sensors $\Scal^c$ transmit before $\Scal$. We cycle through all such orderings until for each $\Scal$, there has been one round with a corresponding ordering in which no sensors were eliminated. We then choose one $\Scal$ that never produced a rate larger than the smallest rate encountered so far. We perform rounds in a order corresponding to $\Scal$ from then on. If the rate ever changes and is no longer the minimum rate encountered so far, we choose a different minimizing $\Scal$. The minimum rate will always be no greater than the achievable rate without eavesdropping, so after enough rounds, we achieve the same average rate.

\section{Fixed-Rate Coding}\label{section:fixedrate}

Consider an $m$-tuple of rates $(R_1,\cdots,R_m)$, encoding functions
$f_i:\Xcal_i^n\to\{1,\cdots,2^{nR_i}\}$
for $i\in\Mcal$, and decoding function
\[g:\prod_{i=1}^m\{1,\cdots,2^{nR_i}\}\to\Xcal_1^n\times\cdots\times\Xcal_m^n.\]
Let $I_i\in\{1,\cdots,2^{nR_i}\}$ be the message transmitted by sensor $i$. If sensor $i$ is honest, $I_i=f_i(X_i^n)$. If it is a traitor, it may choose $I_i$ arbitrarily, based on  $W^n$. Define the probability of error $P_e\triangleq\Pr\big(X_\Hcal^n\ne \hat{X}_\Hcal^n\big)$
where $\hat{X}_\Mcal^n=g(I_1,\cdots,I_m)$. 

We say an $m$-tuple $(R_1,\cdots,R_m)$ is \emph{deterministic-fixed-rate achievable} if for any $\eps>0$ and sufficiently large $n$, there exist coding functions $f_i$ and $g$ such that, for any choice of actions by the traitors, $P_e\le\eps$. Let $\Rcal_{\text{dfr}}\subset\Rmbb^m$ be the set of deterministic-fixed-rate achievable $m$-tuples.

For randomized fixed-rate coding, the encoding functions become
\[f_i:\Xcal_i^n\times\Zcal\to\{1,\cdots,2^{nR_i}\}\]
where $\Zcal$ is the alphabet for the randomness. If sensor $i$ is honest, $I_i=f_i(X_i^n,\rho_i)$, where $\rho_i\in\Zcal$ is the randomness produced at sensor $i$. Define an $m$-tuple to be \emph{randomized-fixed-rate achievable} in the same way as above, and $\Rcal_{\text{rfr}}\subset\Rmbb^m$ to be the set of randomized-fixed-rate achievable rate vectors.

For any $\Scal\subset\Mcal$, let $\text{SW}(X_\Scal)$ be the Slepian-Wolf rate region on the random variables $X_\Scal$. That is, 
\[\text{SW}(X_\Scal)\triangleq\bigg\{R_\Scal:\forall \Scal'\subset \Scal:\sum_{i\in \Scal'}R_i\ge H(X_{\Scal'}|X_{\Scal\backslash \Scal'})\bigg\}.\]
Let
\begin{align*}\Rcal^*_{\text{rfr}}&\triangleq\{(R_1,\cdots,R_m):\forall \Scal\in\Hmsc:R_\Scal\in\text{SW}(X_\Scal)\},\\
\Rcal^*_{\text{dfr}}&\triangleq\{(R_1,\cdots,R_m)\in\Rcal^*_{\text{rfr}}:\forall \Scal_1,\Scal_2\in\Hmsc:\\
&\qquad\qquad\text{if }\exists r\in R(\Scal_2):H_r(X_{\Scal_1\cap\Scal_2}|W)=0,\\
&\qquad\qquad\text{then }R_{\Scal_1\cap\Scal_2}\in\text{SW}(X_{\Scal_1\cap\Scal_2})\}
\end{align*}
The following theorem gives the rate regions explicitly.
\begin{theorem}\label{thm:fixedrate}
The fixed-rate achievable regions are given by
\[\Rcal_{\text{dfr}}=\Rcal^*_{\text{dfr}}\qquad\text{and}\qquad \Rcal_{\text{rfr}}=\Rcal^*_{\text{rfr}}.\]
\end{theorem}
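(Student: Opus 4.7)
The proof splits into achievability and converse for each of $\Rcal_{\text{rfr}}$ and $\Rcal_{\text{dfr}}$. The Slepian--Wolf constraints shared by both regions admit a common converse: for each $\Scal\in\Hmsc$, take the scenario $\Hcal=\Scal$ in which the traitors $\Scal^c$ transmit messages independent of $W$ (e.g., constants). The decoder then effectively has only the honest messages $(f_i(X_i^n,\rho_i))_{i\in\Scal}$ available, and reliable reconstruction of $X_\Scal$ becomes ordinary Slepian--Wolf coding; Fano's inequality gives $R_\Scal\in\text{SW}(X_\Scal)$. This attack uses no traitor-side randomization, so the same lower bound applies to both the deterministic and the randomized fixed-rate setups.

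For randomized achievability I would have each sensor $i$ pick a random bin assignment $\rho_i$ at rate $R_i$ and transmit both the index $\rho_i$ and the resulting bin, analogous to the randomization described in Section~\ref{subsection:random}. The decoder runs, for each $\Scal\in\Hmsc$, a separate Slepian--Wolf decoder using the messages from $\Scal$ to produce a candidate $\hat X_\Scal^{(\Scal),n}$; when $\Scal=\Hcal$ this candidate is correct with high probability because $R_\Hcal\in\text{SW}(X_\Hcal)$, while for $\Scal\ne\Hcal$ the traitors in $\Scal\setminus\Hcal$ do not know the bin assignments $\rho_i$ chosen by the honest sensors in $\Scal\cap\Hcal$, and an argument paralleling Section~\ref{subsection:errorprob} shows that except with vanishing probability they cannot produce messages that both agree with some jointly typical $X_\Scal^n$ and are consistent with the honest messages from $\Scal\cap\Hcal$. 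The decoder then outputs $\hat X_i$ by taking a plurality over $\{\Scal\in\Hmsc:i\in\Scal\}$, as in the four-sensor example of Section~\ref{section:example}.

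The additional deterministic constraint exploits the fact that when $H_r(X_{\Scal_1\cap\Scal_2}|W)=0$ the traitors $\Scal_2^c\supset\Scal_1\setminus\Scal_2$ learn $X_{\Scal_1\cap\Scal_2}^n$ exactly and, since the encoders are deterministic, can compute $f_i(\tilde x_i^n)$ on any fabricated sequence. Supposing $R_{\Scal_1\cap\Scal_2}\notin\text{SW}(X_{\Scal_1\cap\Scal_2})$ for contradiction, I would consider under $\Hcal=\Scal_2$ and $W\sim r$ the attack in which the traitors draw $\tilde X_{\Scal_1\setminus\Scal_2}^n$ from the conditional $p(x_{\Scal_1\setminus\Scal_2}|X_{\Scal_1\cap\Scal_2}^n)$ and transmit $f_i(\tilde X_i^n)$. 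Because $(X_{\Scal_1\cap\Scal_2}^n,\tilde X_{\Scal_1\setminus\Scal_2}^n)$ is marginally distributed as $p(x_{\Scal_1})$, the decoder's transcript is indistinguishable up to vanishing total variation from the transcript of a parallel scenario with $\Hcal=\Scal_1$, real sources $(X_{\Scal_1\cap\Scal_2}^n,\tilde X_{\Scal_1\setminus\Scal_2}^n)$, and a suitably chosen $\Scal_2\setminus\Scal_1$ attack; the decoder's common output $\hat X_{\Scal_1\cap\Scal_2}^n$ must then be correct in both scenarios while carrying information only through the $\Scal_1\cap\Scal_2$ messages, which by Fano's inequality forces $R_{\Scal_1\cap\Scal_2}\in\text{SW}(X_{\Scal_1\cap\Scal_2})$. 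Deterministic achievability uses the same parallel Slepian--Wolf scheme as the randomized case but without per-transaction randomization, and the extra constraint is exactly what rules out the rate tuples for which this counterfeit attack succeeds, so the plurality decoder still works.

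The main obstacle will be making the indistinguishability step in the deterministic converse fully rigorous. The counterfeit under $\Hcal=\Scal_2$ and the alternative under $\Hcal=\Scal_1$ do not in general produce identical joint distributions on the full message vector, so a typicality-based coupling argument is needed to force the decoder's error events on the two scenarios to collapse together. The assumption $H_r(X_{\Scal_1\cap\Scal_2}|W)=0$ is precisely what lets the traitors match the honest marginal on $\tilde X_{\Scal_1\setminus\Scal_2}$ and so isolates the deterministic constraint from the randomized one; with randomized encoding the traitors would additionally have to fabricate messages matching the honest bin-assignment distribution over the unknown $\rho_i$, which they cannot do, and this is why the extra constraint disappears in the randomized setting.
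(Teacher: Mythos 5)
Your randomized converse and achievability match the paper's approach in essence: a black-hole attack by $\Scal^c$ reduces to the Slepian--Wolf converse on $X_\Scal$, and the achievability runs parallel Slepian--Wolf decoders for each $\Scal\in\Hmsc$ together with per-sensor randomized subcodebook selection (the paper picks any non-null estimate rather than a plurality, but since it proves all non-null estimates agree with high probability, both work). You are also right that the same converse argument applies to both fixed-rate regions.

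Your deterministic converse, however, is built on the wrong technique, and the gap you flag at the end is exactly the point at which it breaks. You simulate $\tilde X_{\Scal_1\setminus\Scal_2}^n\sim p(\cdot|X_{\Scal_1\cap\Scal_2}^n)$ and argue distributional indistinguishability from an $\Hcal=\Scal_1$ scenario; that is the variable-rate converse strategy, and there is no reason the two message vectors should be close in total variation, nor does it even give a contradiction, since $X_{\Scal_1\cap\Scal_2}^n$ is the same object in both scenarios. The paper's proof in Section~\ref{subsection:dfrconverse} avoids any coupling by exploiting determinism pointwise. From $R_{\Scal_1\cap\Scal_2}\notin\text{SW}(X_{\Scal_1\cap\Scal_2})$ it extracts a specific pair $X_{\Scal_1\cap\Scal_2}^n\ne x'^n_{\Scal_1\cap\Scal_2}$ (both in a large ``always-decoded-correctly'' set $D$) that map to the same codewords. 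Because $H_r(X_{\Scal_1\cap\Scal_2}|W)=0$, the traitors know $X_{\Scal_1\cap\Scal_2}^n$ exactly and can therefore compute $x'^n_{\Scal_1\cap\Scal_2}$ and a completion $x'^n_{\Scal_1\setminus\Scal_2}$ with $(x'^n_{\Scal_1\cap\Scal_2}x'^n_{\Scal_1\setminus\Scal_2})\in D(X_{\Scal_1})$, and send $f_i(x'^n_i)$ for $i\in\Scal_1\setminus\Scal_2$. The resulting transcript is \emph{identical} (not just close in distribution) to one arising when $\Scal_1$ is honest with source $(x'^n_{\Scal_1\cap\Scal_2},x'^n_{\Scal_1\setminus\Scal_2})$; the deterministic decoder then must output $X_{\Scal_1\cap\Scal_2}^n$ under one hypothesis and $x'^n_{\Scal_1\cap\Scal_2}$ under the other, which is impossible. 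The idea you need but do not use is that deterministic encoders let the traitors forge an \emph{exact} codeword collision, not merely a distributional replica.

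Your deterministic achievability is also incomplete. You claim the extra constraint ``is exactly what rules out the rate tuples for which this counterfeit attack succeeds,'' but that only covers pairs $\Scal$ for which some $r\in\Rcal(\Scal)$ satisfies $H_r(X_{\Hcal\cap\Scal}|W)=0$. The paper must still handle the complementary case $H_r(X_{\Hcal\cap\Scal}|W)>0$, in which $R_{\Hcal\cap\Scal}\notin\text{SW}(X_{\Hcal\cap\Scal})$ is allowed and the decoder necessarily relies on the traitor messages from $\Scal\setminus\Hcal$. There the argument is that the traitors' \emph{residual} uncertainty about $X_{\Hcal\cap\Scal}^n$ makes it exponentially unlikely that any bin they can commit to contains a $z^n$ that is jointly typical with a wrong $y'^n$ colliding with $Y^n$; this is the $k_3$-counting and binomial-concentration argument in the paper, which parallels Section~\ref{subsection:imperfect}. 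Without this half of the case split, the scheme is not shown to achieve the full region $\Rcal^*_{\text{dfr}}$.
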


\section{Proof of Theorem~\ref{thm:fixedrate}}\label{section:prooffixed}

\subsection{Converse for Randomized Coding}

Assume $(R_1,\cdots,R_m)$ is randomized-fixed-rate achievable. Fix $\Scal\in\Hmsc$. Suppose $\Scal^c$ are the traitors and perform a black hole attack. Thus $\hat{X}_\Scal^n$ must be based entirely on $\{f_i(X_i^n)\}_{i\in \Scal}$, and since $\Pr(X_\Scal\ne\hat{X}_\Scal)$ can be made arbitrarily small, by the converse of the Slepian-Wolf theorem, which holds even if the encoders may use randomness, $R_\Scal\in \text{SW}(X_\Scal)$.

\subsection{Converse for Deterministic Coding}\label{subsection:dfrconverse}

%For $m-2t<1$, we need to show that we cannot do better than $\Rmsc_1$. Note that $1-m+2t>0$ and $t-(1-m+2t)=m-t+1>0$, since there cannot be more traitors than sensors. Thus $1-m+2t$ of the traitors could choose to send gibberish, so all estimates would have to be based on messages from the other sensors. This is effectively a situation with $m'=m-(1-m+2t)$ sensors and $t'=t-(1-m+2t)$ traitors. Since $m'-2t'=1$, if we assume the converse holds for $m-2t\ge 1$, achievability requires that $R_i\ge H(X_i)$ for all $i$ except the $1-m+2t$ gibberish sending sensors. This set could have been any set of size $1-m+2t$, so we must have $R_i\ge H(X_i)$ for all $i\in\Mcal$, which is equivalent to $\Rmsc_\text{dfr}\subset\Rmsc_1$. Thus the converse for $m-2t<1$ reduces to that for $m-2t\ge 1$, so we assume the latter for the rest of the proof.

Assume $(R_1,\cdots,R_m)$ is deterministic-fixed-rate achievable. The converse for randomized coding holds equally well here, so $(R_1,\cdots,R_m)\in\Rcal^*_{\text{rfr}}$. We prove by contradiction that $(R_1,\cdots,R_m)\in\Rcal^*_{\text{dfr}}$ as well. Suppose $(R_1,\cdots,R_m)\in\Rcal^*_{\text{rfr}}\backslash\Rcal^*_{\text{dfr}}$, meaning that for some $\Scal_1,\Scal_2\in\Hmsc$, there exists $r\in R(\Scal_2)$ such that $H_r(X_{\Scal_1\cap\Scal_2}|W)=0$ but $R_{\Scal_1\cap\Scal_2}\not\in\text{SW}(X_{\Scal_1\cap\Scal_2})$. Consider the case that $\Hcal=\Scal_1$ and $r$ is such that $H_r(\Scal_1\cap\Hcal|W)=0$. Thus the traitors always have access to $X_{\Scal_1\cap\Hcal}^n$.

For all $\Scal\in\Hmsc$, let $D(X_\Scal)$ be the subset of $T_\eps^n(X_\Scal)$ such that all sequences in $D$ are decoded correctly if $\Scal^c$ are the traitors and no matter what messages they send. Thus the probability that $X_\Scal^n\in D(X_\Scal)$ is large. Let $D(X_{\Scal_1\cap\Hcal})$ be the marginal intersection of $D(X_{\Scal_1})$ and $D(X_\Hcal)$. That is, it is the set of sequences $x_{\Scal_1\cap\Hcal}^n$ such that there exists $x_{\Scal_1\backslash\Hcal}^n$ and $x_{\Hcal\backslash\Scal_1}^n$ with $(x_{\Scal_1\cap\Hcal}^nx_{\Scal_1\backslash\Hcal}^n)\in D(X_{\Scal_1})$ and $(x_{\Scal_1\cap\Hcal}^nx_{\Hcal\backslash\Scal_1})\in D(X_\Hcal)$. Note that with high probability $X_{\Scal_1\cap\Hcal}^n\in D(X_{\Scal_1\cap\Hcal})$. Suppose $X_{\Scal_1\cap\Hcal}^n\in D(X_{\Scal_1\cap\Hcal})$ and $(X_{\Scal_1\cap\Hcal}^nX_{\Hcal\backslash\Scal_1}^n)\in D(X_\Hcal)$, so by the definition of $D$, $\hat{X}_{\Scal_1\cap\Hcal}^n=X_{\Scal_1\cap\Hcal}^n$. Since $R_{\Scal_1\cap\Hcal}\not\in\text{SW}(X_{\Scal_1\cap\Hcal})$, there is some $x'^n_{\Scal_1\cap\Hcal}\in D(X_{\Scal_1\cap\Hcal})$ mapping to the same codewords as $X_{\Scal_1\cap\Hcal}$ such that $x'^n_{\Scal_1\cap\Hcal}\ne X_{\Scal_1\cap\Hcal}^n$. Because the traitors have access to $X_{\Scal_1\cap\Hcal}$, they can construct $x'^n_{\Scal_1\cap\Hcal}$, and also find $x'^n_{\Scal_1\backslash\Hcal}$ such that $(x'^n_{\Scal_1\cap\Hcal}x'^n_{\Scal_1\backslash\Hcal})\in D(X_{\Scal_1})$. If the traitors report $x'^n_{\Scal_1\backslash\Hcal}$, then we have a contradiction, since this situation is identical to that of the traitors being $\Scal_1^c$, in which case, by the definition of $D$, $\hat{X}_{\Scal_1\cap\Hcal}^n=x'^n_{\Scal_1\cap\Hcal}$.

\subsection{Achievability for Deterministic Coding}

Fix $(R_1,\cdots,R_m)\in\Rcal^*_{\text{dfr}}$. Our achievability scheme will be a simple extension of the random binning proof of the Slepian-Wolf theorem given in \cite{Cover:75IT}. Each encoding function $f_i:\Xcal_i^n\to\{1,\cdots,2^{nR_i}\}$ is constructed by means of a random binning procedure. Decoding is then performed as follows.  For each $\Scal\in\Hmsc$, if there is at least one $x_\Scal^n\in T_\eps^n(X_\Scal)$ matching all received codewords from $\Scal$, let $\hat{x}_{i,\Scal}^n$ be one such sequence for all $i\in s$. If there is no such sequence, leave $\hat{x}_{i,\Scal}^n$ null. Note that we produce a separate estimate $\hat{x}_{i,\Scal}^n$ of $X_i^n$ for all $\Scal\ni i$. Let $\hat{x}_i^n$ equal one non-null $\hat{x}_{i,\Scal}^n$.

We now consider the probability of error. With high probability, $\hat{x}_{i,\Hcal}^n=X_i^n$ for honest $i$. Thus all we need to show is that for all other $\Scal\in\Hmsc$ with $i\in\Scal$, $\hat{x}_{i,\Scal}$ is null or also equal to $X_i^n$. Fix $\Scal\in\Hmsc$. If there is some $r\in R(\Scal)$  with $H_r(X_{\Hcal\cap\Scal}|W)=0$, then by the definition of $\Rcal^*_{\text{dfr}}$, $R_{\Hcal\cap\Scal}\in\text{SW}(X_{\Hcal\cap\Scal})$. Thus with high probability the only sequence $x_{\Hcal\cap\Scal}^n\in T_\eps^n(X_{\Hcal\cap\Scal})$ matching all received codewords will be $X_{\Hcal\cap\Scal}^n$, so $\hat{x}_{i,\Scal}^n=X_i^n$ for all $i\in\Hcal\cap\Scal$.

Now consider the case that $H_r(X_{\Hcal\cap\Scal}|W)>0$ for all $r\in R(\Scal)$. For convenience, let $Y=X_{\Hcal\cap\Scal}$ and $Z=X_{\Tcal}$. Let $R_Y=\sum_{i\in\Hcal\cap\Scal}R_i$ and $R_Z=\sum_{i\in\Tcal}R_i$. Since $R_\Scal\in\text{SW}(X_\Scal)$, $R_Y+R_Z\ge H(YZ)+\eta$ for some $\eta$. Let $b_Y(y^n)$ be the set of sequences in $\Ycal^n$ that map to the same codewords as $y^n$, and let $b_Z\subset\Zcal^n$ be the set of sequences mapping to the codewords sent by the traitors. Then $Y$ may be decoded incorrectly only if there is some $y'^n\in b_Y(Y^n)$  and some $z^n\in b_Z$ such that $y'^n\ne Y^n$ and $(y'^nz^n)\in T_\eps^n(YZ)$. For some $w^n\in \Wcal^n$,
\begin{align}
&\Pr(\exists y'^n\in b_Y(Y^n)\backslash\{Y^n\},z^n\in b_Z:\nonumber\\
&\qquad\qquad\qquad\qquad\qquad\qquad(y'^nz^n)\in T_\eps^n(YZ)|W^n=w^n)\nonumber\\
&\le\Pr(Y^n\not\in T_\eps^n(Y|w^n)|W^n=w^n)+\sum_{y^n\in T_\eps^n(Y|w^n)}p(y^n|w^n)\displaybreak\nonumber\\
&\quad\cdot \textbf{1}\{\exists y'^n\in b_Y(y^n)\backslash\{y^n\},z^n\in b_Z:(y'^nz^n)\in T_\eps^n(YZ)\}\nonumber\\
&\le\eps+2^{-n(H(Y|W)-\eps)}\sum_{z^n\in b_Z\cap T_\eps^n(Z)}k_3(z^n,w^n)\label{eq:k3}
\end{align}
where
\begin{multline*}
k_3(z^n,w^n)\triangleq|\{y^n\in T_\eps^n(Y|w^n):\\
\exists y'^n\in b_Y(y^n)\cap T_\eps^n(Y|z^n)\backslash\{y^n\}\}|.\end{multline*}
On average, the number of typical $y^n$ put into a bin is at most $2^{n(H(Y)-R_Y+\eps)}$, so we can use \eqref{eq:binomialbound} to assume with high probability than no more than $2^{n(H(Y)-R_Y+2\eps)}$ are put into any bin. Note that
\begin{multline*}\sum_{z^n\in T_\eps^n(Z)}k_3(z^n,w^n)\\
\begin{aligned}&\le \sum_{z^n\in T_\eps^n(Z)}\sum_{y^n\in T_\eps^n(Y|w^n)}|b_Y(y^n)\cap T_\eps^n(Y|z^n)\backslash\{y^n\}|\\
&=\sum_{y^n\in T_\eps^n(Y|w^n)}\sum_{y'^n\in b_Y(y^n)\cap T_\eps^n(Y|z^n)\backslash\{y^n\}}|T_\eps^n(Z|y'^n)|\\
&\le 2^{n(H(Y|W)+\eps)}2^{n(H(Y)-R_Y+2\eps)}2^{n(H(Z|Y)+\eps)}\\
&=2^{n(H(YZ)+H(Y|W)-R_Y+4\eps)}.\end{aligned}\end{multline*}
The average $k_3$ sum over typical $z^n$ in a given bin is thus 
\[2^{n(H(YZ)+H(Y|W)-R_Y-R_Z+4\eps)}\le 2^{n(H(Y|W)+4\eps-\eta)}.\]
We can use an argument similar to that in Section~\ref{subsection:imperfect}, partitioning $T_\eps^n(Z)$ into different $l$ values, to show that with high probability, since $H(Y|W)>0$, for all bins $b_Z$,
\[\sum_{z^n\in T_\eps^n(Z)\cap b_Z}k_3(z^n,w^n)\le 2^{n(H(Y|W)+5\eps-\eta)}.\]
Applying this to \eqref{eq:k3} gives
\begin{multline*}
\Pr(\exists y'^n\in b_Y(Y^n)\backslash\{y^n\},z^n\in b_Z:\\
(y'^nz^n)\in T_\eps^n(YZ)|W^n=w^n)\le \eps+2^{n(6\eps-\eta)}.\end{multline*}
Letting $\eta>6\eps$ ensures that the probability of error is always small no matter what bin $b_Z$ the traitors choose.

\subsection{Achievability for Randomized Coding}

We perform essentially the same coding procedure as with deterministic coding, expect we also apply randomness in a similar fashion as with variable-rate coding. The only difference from the deterministic coding scheme is that each sensor has a set of $C$ identically created subcodebooks, from which it randomly chooses one, then sends the chosen subcodebook index along with the codeword. Decoding is the same as for deterministic coding. An argument similar to that in \hbox{Section~\ref{subsection:errorprob}} can be used to show small probability of error.

\section{Conclusion}\label{section:conclusion}

We gave an explicit characterization of the region of achievable rates for a Byzantine attack on distributed source coding with variable-rate codes, deterministic fixed-rate codes, and randomized fixed-rate codes. We saw that a different set of rates were achievable for the three cases, and gave converse proofs and rate achieving coding schemes for each. Variable-rate achievability was shown using an algorithm in which sensors use randomness to make it unlikely that the traitors can fool the coding process.

Much more work could be done in the area of Byzantine network source coding. Multiterminal rate distortion \cite{Tung:thesis,Berger:Chapter78} could be studied, or other topologies, such as side information. However, perhaps the biggest drawback in this paper is that, as we discussed in the introduction, because the traitors cannot in general be identified, it is difficult to imagine applications that do not require some post processing of the source estimates, for example to estimate some underlying process. Thus it would make sense to solve the coding and estimation problems simultaneously, such as in the the CEO problem \cite{Berger&etal:96IT}.

\bibliographystyle{ieeetr}
{
\bibliographystyle{ieeetr}
\bibliography{\ACSP/Reference/Bibs/Journal,\ACSP/Reference/Bibs/Conf,%
\ACSP/Reference/Bibs/Book,\ACSP/Reference/Bibs/Misc,\ACSP/Reference/Bibs/ACSP-J,\ACSP/Reference/Bibs/ACSP-C}

\begin{thebibliography}{10}

\bibitem{Slepian&Wolf:73IT}
D.~Slepian and J.~Wolf, ``{Noiseless coding of correlated information
  sources},'' {\em IEEE Trans. Information Theory}, vol.~IT-19, pp.~471--480,
  1973.

\bibitem{Marano&Matta&Tong:06Asilomar}
S.~Marano, V.~Matta, and L.~Tong, ``{Distributed inference in the presence of
  Byzantine sensors},'' in {\em Proc. 40th Annual Asilomar Conf. on Signals,
  Systems, and Computers}, (Pacific Grove, CA), Oct 29--Nov 1 2006.

\bibitem{Lamport&Shostak&Pease:82ACM}
L.~Lamport, R.~Shostak, and M.~Pease, ``The byzantine generals problem,'' {\em
  ACM Transactions on Programming Languages and Systems}, vol.~4, pp.~382--401,
  July 1982.

\bibitem{Dolev:82}
D.~Dolev, ``{The Byzantine generals strike again},'' {\em Journal of
  Algorithms}, vol.~3, no.~1, pp.~14--30, 1982.

\bibitem{Perlman:88thesis}
R.~Perlman, {\em Network Layer Protocols with Byzantine Robustness}.
\newblock PhD thesis, Massachusetts Institute of Technology, Cambridge, MA,
  August 1988.

\bibitem{Zhou&Haas:99}
L.~Zhou and Z.~J. Haas, ``{Securing ad hoc networks},'' {\em IEEE Network
  Magazine}, vol.~13, pp.~24--30, Nov/Dec 1999.

\bibitem{Hu&Perrig:04}
Y.~Hu and A.~Perrig, ``{Security and privacy in sensor networks},'' {\em IEEE
  Security and Privacy Magazine}, vol.~2, pp.~28--39, 2004.

\bibitem{Ho&etal:04ISIT}
T.~Ho, B.~Leong, R.~Koetter, M.~M\'{e}dard, M.~Effrons, and D.~Karger,
  ``{Byzantine modification detection in multicast networks using randomized
  network coding},'' in {\em IEEE Proc. Intl. Sym. Inform. Theory}, p.~143,
  June 27--July 2 2004.
  
\bibitem{Awerbuch&etal:02WISE}
B.~Awerbuch, D.~Holmer, C.~Nita-Rotaru, and H.~Rubens,
``{An on-demand secure routing protocol resilient to byzantine
failures},'' in {\em ACM Workshop on Wireless Security
(WiSe)}, September 2002.

\bibitem{Kosut&Tong:06Allerton}
O.~Kosut and L.~Tong, ``{Capacity of cooperative fusion in the presence of
  Byzantine sensors}.'' in {\em Proc. 44th Annual Allerton Conf. on Commun.,
  Control and Comp.}, (Monticello, IL), Sep 27--29 2006.

\bibitem{Jaggi&etal:05ISIT}
T.~H. S.~Jaggi, M.~Langberg and M.~Effros, ``{Correction of adversarial errors
  in networks},'' in {\em Proceedings of International Symposium in Information
  Theory and its Applications}, (Adelaide, Australia), 2005.
  
\bibitem{Wyner:75BSTJ}
  A.~Wyner, ``{The wiretap channel},'' {\em Bell Syst. Tech. J.}, vol.~54,
  pp.~1355--1387, 1975.
  
\bibitem{Wyner:75IT}
  A.~Wyner, ``{The common information of two dependent random wariables},'' 
  {\em IEEE Trans. Inform. Theory}, vol.~21, pp.~163--179, March 1975.
  
\bibitem{Cover:75IT}
T.~M.~Cover, ``{A proof of the data compression theorem of Slepian and Wolf for ergodic sources},'' 
  {\em IEEE Trans. Inform. Theory}, vol.~21, pp.~226--228,
  March 1975.
  
\bibitem{Tung:thesis}
S.~Y. Tung, {\em Multiterminal Source Coding}.
\newblock PhD thesis, Cornell University, Ithaca, NY, 1978.

\bibitem{Berger:Chapter78}
T.~Berger, {\em The Information Theory Approach to Communications (G.
  Longo, ed.)}, chapter Multi-terminal source coding. Springer-Verlag, 1978.

\bibitem{Berger&etal:96IT}
T.~Berger, Z.~Zhang, and H.~Viswanathan, ``{The CEO problem [multiterminal
  source coding]},'' {\em IEEE Trans. Inform. Theory}, vol.~42, pp.~887--902,
  May. 1996.

\end{thebibliography}
}

\end{document}